\title[Algebraic structure of quasiradial solutions]
{Algebraic structure of  quasiradial solutions to the
$\gamma$-harmonic equation}
\author[Vladimir G. Tkachev]{Vladimir G. Tkachev}
\address{Mathematical Department \\ Volgograd State University\\ 2-ja Prodolnaja 30 Volgograd\\ 400062 Russia}
\email{vladimir.tkachev@volsu.ru}
\urladdr{http://www.math.kth.se/\~{}tkatchev}
\thanks{The author was supported by Russian President grant
for young doctorates no.~ 00-15-99274, grant RFBR no.~03-01-00304
and grant of MHE no.~ E~02-11.0-39.}
\newtheorem{thm}{Theorem}[section]
\newtheorem{prop}[thm]{Proposition}
\newtheorem{lem}[thm]{Lemma}
\newtheorem{cor}[thm]{Corollary}
\theoremstyle{definition}
\newtheorem{defn}[thm]{Definition}
\theoremstyle{remark}
\date{Received date / Revised version date}
\def\aeq{{\uxsq\uxx + 2 \ux \uy \uxy + \uysq \uyy}}
\def\ux{{u_{x}}}
\def\uy{{u_{y}}}
\def\uxx{{u_{xx}}}
\def\uxy{{u_{xy}}}
\def\uyy{{u_{yy}}}
\def\uxsq{{u_{x}^{2}}}
\def\uysq{{u_{y}^{2}}}
\def\Com#1{\mathbb{C}^{#1}}
\def\R#1{\mathbb{R}^{#1}}
\def\Z#1{\mathbb{Z}^{#1}}
\def\N#1{\mathbb{N}^{#1}}
\def\Q{\mathbb{Q}}
\def\scal#1#2{\langle #1; #2 \rangle}
\def\t{s}
\DeclareMathOperator{\Div}{div} 
 \DeclareMathOperator{\re}{Re }
\DeclareMathOperator{\im}{Im }
\def\g{\gamma}
\begin{document}






\begin{abstract}
We obtain an explicit representation for quasiradial
$\gamma$-har\-mo\-nic functions, which shows that these functions have
essentially algebraic nature. In particular, we give a complete
description of all $\gamma$ which admit algebraic quasiradial
solutions. Unlike the cases $\g=\infty$ and $\g=1$, only finitely
many algebraic solutions is shown to exist for any fixed $|\g|>1$.
Moreover, there is a special extremal series of $\g $ which exactly
corresponds  to the well-known ideal $m$-atomic gas adiabatic
constant $\g=\frac{2m+3}{2m+1}$.
\end{abstract}

\maketitle



\section{Introduction}

\label{subsec01}We study specific solutions to the following
quasilinear equation
\begin{equation}\label{main0}
\begin{split}
u_{xx}\left((\g+1)u^2_x+(\g-1)u_y^2\right)+4u_{xy}u_xu_y+
u_{yy}((\g+1) u^2_y+(\g-1)u_x^2)&=0
\end{split}
\end{equation}
where $ |\gamma|>1$ or $\gamma=1$. Let $L_\gamma[u]$ denote the left-hand side of this equation. A solution of the form
\begin{equation} u(x,y)=\rho ^kf(\theta), \quad k\geq 1, \label{resh}
\end{equation}
where $\rho$ and $\theta $ are the polar coordinates in the
$(x,y)$-plane, is said to be a \textit{quasiradial}. The origin of this study goes back to the well-known
$p$-Laplace equation
\begin{equation*}\label{equ:p-laplace}
\Div \left(|\nabla u|^{p -2}\nabla u\right)=0,
\end{equation*}
which is the divergence form (\ref{main0}) with $ p
=\frac{2\g}{\g-1}$. We call solutions to (\ref{main0})
$\gamma$-harmonic functions.

The existence and integral representations for quasiradial
$\gamma$-harmonic functions were previously established by G.~
Aronsson in \cite{Ar2}--\cite{Ar91}. In particular, it was shown in
\cite{Ar3} (see also \cite{Persen89} and \cite{Ar86}) that
\textit{single-valued} quasiradial $\gamma$-harmonic functions do
exist only for those exponents $k$ in (\ref{resh}) which satisfy the
characteristic equation
\begin{equation}\label{equ}
(2N-1)(\g+1)k^2-2(N^2\g+2N-1)k+N^2(1+\g)=0, \quad n\in\N{}.
\end{equation}
We refer  to the corresponding solution as the $N$-\textit{solution}
to (\ref{main0}). One can easily see that the $0$-solutions are
constants, and the $1$-solutions are linear functions. In what
follows, such solutions are said to be the \textit{trivial}
solutions to (\ref{main0}).

The limit case $\gamma=\infty$ reduces to the standard Laplace
equation $\Delta u=0$, and the corresponding $N$-solutions are
harmonic polynomials of degree $N$. Note that harmonic polynomials
are \textit{algebraic} functions, i.e. they satisfy some (actually,
trivial one) polynomial identity $P(x,y,u)\equiv 0$. We show that
this property is still valid for $N$-solutions of the Aronsson
equation (i.e. $\gamma=1$)
\begin{equation}
u_{xx}u^2_x+2u_{xy}u_xu_y+u_{yy}u^2_y=0, \label{equ:aron}
\end{equation}

It is the aim of this paper to study this phenomena for general
$\gamma$'s. To make this point more explicit, we have to note that
for $\g\ne\infty$ a (weak) solution of (\ref{main0}) is normally in
the class $C^{1,\alpha}$. In particular, quasiradial solutions have
a H\"older singularity near the origin, and one should consider them
as `singular solutions' (the terminology is borrowed from
\cite{Ar89}). This non-regular character is a consequence of the
general situation for $\gamma$-harmonic functions near their
singular points (i.e. at the points at which $|\nabla u|=0$), see
\cite{Ur}, \cite{Ev82}, \cite{Lew83}.

Our key result is the following explicit parametric representations
for $N$-solutions:
\begin{equation}\label{equ:poly}
\begin{split}
x+iy&=e^{i\phi}(\mu \zeta|\zeta|^{2(N-1)}+\bar{\zeta}^{2N-1}),\\
u_N&=C|\zeta|^{k(2N-1)-N}\cdot\re \zeta^N.
\end{split}
\end{equation}
Here $\zeta\in\Com{}$ is the parametrization variable, $\phi$ is an
arbitrary constant, $k=k(N,\g)$ is the biggest root of (\ref{equ}),
and $\mu$ is defined by (\ref{equ:mu-g-n}) below. We show also that
(\ref{equ:poly}) represents an entire graph over the $(x,y)$-plane
when $\zeta$ runs the  complex plane $\Com{}$.

In particular, it immediately follows  from (\ref{equ}) and
(\ref{equ:poly}), that $u_N$ is an \textit{algebraic} function
whenever $k(N,\gamma)$ is a \textit{rational} number. This makes
more explicit the mentioned above H\"olderian behaviour of
quasiradial solutions at their singular points.

In contrast with (\ref{equ:aron}), we show that for any rational
$|\gamma|>1$, the class of algebraic $N$-solutions is necessary
finite in the sense that the following upper estimate holds $ N\leq
\left\lfloor\frac{q^2(p^2+2-q^2)}{2p^2}\right\rfloor$, where
$\gamma=p/q$, with $q$ and $p$ to be co-prime numbers, and $\lfloor
x\rfloor$ denotes the integer part of $x$. In particular, this
yields the absence of algebraic solutions for all integer numbers
$\g$, $|\g|\geq 2$.

Denote by $\mathcal{A}$ the set of all $\gamma\in\Q{}$,
$|\gamma|>1$, such that (\ref{main0}) admits \textit{nontrivial}
algebraic $N$-solutions. Then $\gamma\in\mathcal{A}$ iff
$-\gamma\in\mathcal{A}$ (see Section~\ref{subsec:conj} below). In
Section~\ref{sec:algebraic} we show that for all rational
$\gamma=p/q\in\mathcal{A}$, $ \gamma>1$, the following bilateral
estimate holds
$$
 \quad 2+q\leq p\leq q^2-2.
$$

The last inequalities are sharp. Moreover, in Section~\ref{sub:max}
we prove that equality $q=p+2$ holds iff $q$ is an odd number,
$q\geq 3$. This yields the so-called minimal series
\begin{equation}\label{q-q-q}
\g=\frac{2N+3}{2N+1}, \qquad N=2,3,4,\ldots,
\end{equation}
and $N$ is the index of the corresponding (a unique for the given
$\gamma$) algebraic quasiradial solution.

We end this introduction with one possible physical interpretation
of (\ref{q-q-q}), which gives a motivation of our choice of $\gamma$
instead of $p$. Namely, observe that (\ref{main0}) is a homogeneous
form of the gas dynamics equation
\begin{equation}
L_\gamma[\phi]=2\Delta \phi, \label{main+}
\end{equation}
for the potential of the gas velocity  $\phi$  \cite[p.~9]{Bers}.
The parameter $\g$ in (\ref{main+}) is the so-called
\textit{adiabatic gas constant}, that is the ratio of the gas'
specific heats at constant volume and constant pressure (see, e.g.
\cite{Tom}). Note that, for all known gas models due to a specific
combinatoric nature of the adiabatic constant $\gamma$, it can be
evaluated in terms of the freedom degrees of the corresponding gas.
In particular, it follows that $\gamma$ a \textit{rational} number.

Moreover, the most important for applications is the simple gas
which consists of $m$ atoms, $\gamma$ is given by the following
ratio
\begin{equation}\label{equ:adia}
\gamma_{m}=\frac{2m+3}{2m+1}, \qquad m=1,2,\ldots,
\end{equation}
(e.g., $m=2$, or $\g_2=7/5$, describes the standard Earth's
atmosphere).

Note that eq. (\ref{main+}) is of non-degenerate elliptic type for
all adiabatic exponents $\gamma>1$. In the recent paper of I.~Zorina
and the author \cite{Zor}, it is proved that for any integer $N\geq
2$ there exists a solution of (\ref{main+}) with non-trivial
polynomial growth $k_N>1$ (where $k_N$ is defined by (\ref{equ}))
which is a real analytic function in the whole $\R{2}$.

Now, the quasiradial $N$-solutions to (\ref{main0}) can be naturally
regarded as the cones (after a suitable scale renormalization) over
the corresponding $N$-solutions to (\ref{main+}). In other words,
(\ref{main0}) represents a microscopy level of the gas flow. In this
connection, the coincidence of the minimal series (\ref{q-q-q}) and
the natural adiabatic constants (\ref{equ:adia}) implicitly
underlines the essence of algebraic character of the corresponding
$N$-solutions. We observe also, that in this case we have $N=m$,
i.e. the atomic number is equal to the index of the corresponding
$N$-solution.

\section{The separation equation}\label{subsec:22}
In this section we study the basic properties of the wave function
$f(\theta)$. For technical reasons, we assume that $|\gamma|>1$.
The case $\gamma=1$ requires a few more care because of degenerate
character of the separate equation (\ref{fte}). However,  all the
formulated below results are still valid in this limit case if we
suppose that $k>1$.

The separation of variables in (\ref{resh})  yields the following
ordinary differential equation
\begin{multline}
f''((\g-1)k^2f^2+(\g+1)f'^2)+\\
+\biggl[f'^2(k(\g+3)-2)+((1+\g)k-2)k^2f^2\biggr]fk=0,
 \label{ft}
 \end{multline}
where the prime denotes the derivative with respect to  $\theta$.
Letting $W=f'^2(\theta)$, $Z=f^2(\theta)$, we can rewrite
(\ref{ft}) as
\begin{equation*}\label{WZ}
  \frac{dW}{dZ}=-\frac{k((\g+3)k-2)W+k^3((\g+1)k-2)Z}{(\g+1)W+(\g-1)k^2Z},
\end{equation*}
which splits into the following linear system
\begin{equation}
\label{WZ1}
    \begin{array}{lcl}
    W'(\xi)&=&-k((\g+3)k-2)W-k^3((\g+1)k-2)Z  \\
    Z'(\xi)&=& (\g+1)W+(\g-1)k^2Z.
    \end{array}
\end{equation}
One can easily verify that
\begin{equation*}
\label{WZ2}
    \begin{array}{lcl}
    W(\xi)&=&C_1k^2e^{-2k^2\xi}+C_2k((\g+1)k-2)e^{2(k-k^2)\xi}\\
    Z(\xi)&=&-C_1e^{-2k^2\xi}-C_2(\g+1)e^{2(k-k^2)\xi},
    \end{array}
\end{equation*}
is the general solution of (\ref{WZ1}), where $C_1$ and $C_2$ are
arbitrary constants. Then
\begin{equation}
\label{WZ3}
    W+k^2Z=-k\eta^{k-1}C_2\qquad
    W+\lambda ^2Z=\frac{2k}{\g+1}\eta ^kC_1,
\end{equation}
where $\eta=e^{-2\xi k}$ and $ \lambda ^2=k^2-\frac{2k}{\g+1}. $
The right-hand side of the latter identity is positive for all
$|\g|>1$ and $k\geq1$, so we can define
\begin{equation}\label{lambda-def}
\lambda:=\sqrt{k^2-\frac{2k}{\g+1}}.
\end{equation}
Then elimination of $\eta$ in (\ref{WZ3}) yields
$
(W+k^2Z)^k=C_3(W+\lambda ^2Z)^{k-1}.
$
Since (\ref{ft}) is a homogeneous equation, it suffices to study
only  the case $C_3=1$. Thus, we have the following first order
differential equation
\begin{equation}
\label{fte} (f'^2(\theta)+k^2f^2(\theta))^k=(f'^2(\theta)+\lambda
^2f^2(\theta))^{k-1}.
\end{equation}
We introduce  new phase variables $ z=f(\theta)$, $w=f'(\theta)$,
and define the set
\begin{equation}
\label{faz} \Gamma=\{(z,w)\in\R{2}:\; (w^2+k^2z^2)^k=(w^2+\lambda
^2z^2)^{k-1}, \; w^2+z^2\ne 0\}.
\end{equation}
Observe  that the intersection $\Gamma$ with the $Oz$-axis
consists of exactly two points (the \textit{apexes}):
$A^{\pm}=(\pm z_0,0)$, where
\begin{equation}\label{z0}
z_0=\lambda^{k-1}k^{-k}.
\end{equation}

\begin{lem}\label{lem:graph}
Let $|\g|>1$ and $k>1$. Then $\Gamma$ a real analytic closed
Jordan curve. Moreover, $\Gamma\setminus\{A^+,A^-\}$ splits in two
mutually symmetric graphs (which have no common points with $Oz$).
\end{lem}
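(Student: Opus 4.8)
The plan is to analyze the defining equation $(w^2+k^2z^2)^k=(w^2+\lambda^2z^2)^{k-1}$ of $\Gamma$ by passing to a convenient angular parametrization and showing that along each ray from the origin the equation has exactly one solution, which depends analytically on the angle. Concretely, I would write $z=r\cos\psi$, $w=r\sin\psi$ with $r>0$ and $\psi\in[0,2\pi)$, so that the defining relation becomes
\begin{equation*}
r^{2k}(\sin^2\psi+k^2\cos^2\psi)^k=r^{2(k-1)}(\sin^2\psi+\lambda^2\cos^2\psi)^{k-1},
\end{equation*}
i.e. $r^2=\dfrac{(\sin^2\psi+\lambda^2\cos^2\psi)^{k-1}}{(\sin^2\psi+k^2\cos^2\psi)^{k}}=:R(\psi)$. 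The first step is to check that $R(\psi)$ is well-defined, strictly positive, and real analytic for all $\psi$: positivity of the numerator uses $\lambda^2=k^2-\frac{2k}{\g+1}>0$ (already noted in the text after \eqref{lambda-def}), and positivity of the denominator is clear since $k>1$; analyticity is immediate because the base expressions never vanish. Hence $\psi\mapsto(\sqrt{R(\psi)}\cos\psi,\sqrt{R(\psi)}\sin\psi)$ is a real-analytic map from the circle $\mathbb{R}/2\pi\mathbb{Z}$ into $\R{2}\setminus\{0\}$, and it parametrizes $\Gamma$.

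The second step is to show this parametrization is injective, i.e. that $\Gamma$ is a Jordan curve and not, say, a curve traversed twice. The map has the obvious $\mathbb{Z}/2$ symmetry $(z,w)\mapsto(-z,-w)$ coming from $\psi\mapsto\psi+\pi$, and also the reflection symmetry $(z,w)\mapsto(z,-w)$ from $\psi\mapsto-\psi$ (since $R$ depends only on $\cos^2\psi,\sin^2\psi$). Injectivity up to the period follows because distinct $\psi$ in $[0,\pi)$ give distinct directions $(\cos\psi,\sin\psi)$ from the origin, and $r=\sqrt{R(\psi)}$ is a single positive value on each such ray; two parameter values can coincide as points of $\R2$ only if they give the same ray, forcing $\psi_1=\psi_2$ or $\psi_1=\psi_2+\pi$, the latter excluded on $[0,\pi)$. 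Thus $\Gamma$ is the homeomorphic image of a circle, hence a closed Jordan curve, and it is real-analytic since $R$ is real-analytic and nonvanishing (so $\sqrt{R}$ is too) and the velocity vector never vanishes — the radial component $\sqrt{R}$ together with the angular motion $\dot\psi=1$ guarantees the parametrization is an immersion.

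For the last assertion, removing the two apexes $A^\pm=(\pm z_0,0)$ corresponds to removing $\psi=0$ and $\psi=\pi$ (one checks $R(0)=\lambda^{2(k-1)}k^{-2k}$, matching $z_0^2$ in \eqref{z0}). This splits the parameter circle into the two open arcs $\psi\in(0,\pi)$ and $\psi\in(\pi,2\pi)$; on the first $w=\sqrt{R(\psi)}\sin\psi>0$ and on the second $w<0$, so neither arc meets $Oz$, and the reflection $(z,w)\mapsto(z,-w)$ interchanges them, exhibiting the two pieces as mutually symmetric graphs over the $z$-interval $[-z_0,z_0]$ (each piece being a graph $w=w(z)$ because, as will be needed, $z$ is monotone along each arc — this follows from computing $\frac{d}{d\psi}(\sqrt{R(\psi)}\cos\psi)$ and checking it has constant sign on $(0,\pi/2)$ and on $(\pi/2,\pi)$, reflecting that the curve rises from $A^+$ to its top and descends to $A^-$). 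The main obstacle I anticipate is precisely this monotonicity-of-$z$ computation needed to upgrade "closed analytic curve with two distinguished points" to "two graphs": it requires differentiating $R(\psi)$ and controlling the sign of a trigonometric polynomial in $k$ and $\lambda$, and one must use the constraint $\lambda^2<k^2$ to get the sign right; everything else is soft topology.
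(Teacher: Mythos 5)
Your treatment of the first assertion (real analytic closed Jordan curve) is essentially the paper's: pass to polar coordinates $z=r\cos\alpha$, $w=r\sin\alpha$ and observe that the defining relation determines $r$ as a strictly positive, real-analytic function of the angle, which immediately gives a Jordan curve. That part is fine, and your discussion of injectivity and symmetry is more explicit than the paper's one-line remark.

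For the second assertion you propose a genuinely different route, and it is here that your argument has a gap. You correctly observe that removing $A^\pm$ leaves two arcs, one with $w>0$ and one with $w<0$, interchanged by the reflection $(z,w)\mapsto(z,-w)$; to conclude that each arc is a graph $w=w(z)$ you need to show $z$ is strictly monotone along the arc, i.e.\ that the tangent to $\Gamma$ is never vertical away from the apexes. You defer exactly this to a computation of $\frac{d}{d\psi}\bigl(\sqrt{R(\psi)}\cos\psi\bigr)$ and a sign analysis which you explicitly do not carry out. That computation is not routine: the sign of $1-\lambda^2$ is not fixed (since $\lambda$ may be less than or greater than $1$ depending on $k$ and $\g$), so the term coming from $R'(\psi)$ does not have an obvious sign, and you would need the precise interplay between $k$ and $\lambda$ to close the argument. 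There is also a small confusion in how you state what needs to be checked: you ask for constant sign of $z'$ separately on $(0,\pi/2)$ and $(\pi/2,\pi)$ while describing the curve as ``rising to its top and descending'' --- the latter is a statement about $w$, whereas the graph property requires $z'$ to have one fixed sign on all of $(0,\pi)$.

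The paper avoids this entirely by treating $\Gamma$ as the zero level set of $F(z,w)=(w^2+k^2z^2)^k-(w^2+\lambda^2z^2)^{k-1}$ and showing $F'_w\neq0$ on $\Gamma\setminus\{A^+,A^-\}$: if $F'_w(z_1,w_1)=0$ with $w_1\neq0$, then combining $F=0$ with $\tfrac{1}{2w_1}F'_w=k(w_1^2+k^2z_1^2)^{k-1}-(k-1)(w_1^2+\lambda^2z_1^2)^{k-2}=0$ and $|\g|>1$ forces $w_1^2=\frac{1-\g^2}{(1+\g)^2}k^2z_1^2\le0$, hence $(z_1,w_1)=(0,0)$, which is excluded. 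By the implicit function theorem $\Gamma$ then has no vertical tangents off the apexes, and the two-graph decomposition follows from the Jordan curve structure with no sign bookkeeping at all. I would recommend you adopt this level-set argument; your parametric approach would also work, but only after you actually carry out the monotonicity computation, which is the hardest step and cannot be left as an aside.
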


\begin{proof}
The first statement easily follows from the representation of
$\Gamma$ in the polar coordinates $z=r\cos\alpha$,
$w=r\sin\alpha$:
$$
r=\frac{(\sin^2\alpha+\lambda^2
\cos^2\alpha)^{(k-1)/2}}{(\sin^2\alpha+k^2\cos^2\alpha)^{k/2}}.
$$

Now, consider $\Gamma$ as the $0$-level set of the function
\begin{equation}\label{F}
F(z;w)=(w^2+k^2z^2)^k-(w^2+\lambda ^2z^2)^{k-1}.
\end{equation}
We claim that $F'_w\ne0$ on $\Gamma\setminus\{A^+, A^-\}$. Indeed,
suppose $F'_w(z_1,w_1)=0$ and $w_1\ne 0$. Then we have
\begin{equation}
\label{grF} \frac{1}{2w_1}F'_w
(z_1,w_1)=k(w_1^2+k^2z_1^2)^{k-1}-(k-1)(w_1^2+\lambda
^2z_1^2)^{k-2}=0,
\end{equation}
which together with $F(z_1,w_1)=0$ and $|\gamma|>1$ implies
$$
w_1^2=\frac{1-\g^2}{(1+\g)^2}k^2z_1^2\leq 0.
$$
It follows that  $w_1=z_1=0$, which contradicts the definition of
$\Gamma$ and implies our claim. Thus, $\Gamma\setminus\{A^+,
A^-\}$ splits into union of two graphs with respect to the
$Oz$-axis and the lemma is proved. \qed
\end{proof}

Now, we construct a special solution $f(\theta)$ of (\ref{fte}),
satisfying the initial condition $f(0)=z_0$. With the above
notation we have
$$
\frac{dz}{w}=\frac{f'(\theta)d\theta}{f'(\theta)}=d\theta.
$$
Define
\begin{equation}\label{e:int}
\Theta(\xi)=\int_{A^+}^\xi \frac{dz}{w},
\end{equation}
where $\xi\in\Gamma$, and the integral is taken clockwise along
the arc $(A^+,\xi)$ of $\Gamma$.  The last integrand a priori has
singular behavior when $w$ vanishes (i.e. for $\xi=A^{\pm}$). But,
it can be  shown that these singularities are removable. Indeed,
using again the representation of $\Gamma$ as the $0$-level set of
function (\ref{F}) we find
\begin{equation}\label{denom}
\frac{dz}{w}=-\frac{F'_wdw}{F'_zw}=-\frac{k(w^2+k^2z^2)^{k-1}-(k-1)(w^2+\lambda
^2z^2)^{k-2}}{k^3(w^2+k^2z^2)^{k-1}-(k-1)\lambda ^2(w^2+\lambda
^2z^2)^{k-2}}\cdot\frac{dw}{z}.
\end{equation}
Now, to show that (\ref{e:int}) has no singularity  it suffices
only to verify that the denominator of the right-hand ratio in
(\ref{denom}) is non-zero in a neighborhood of the apexes. The
corresponding values at $A^{\pm}$ are equal
$\lambda^{2(k-1)}z_0^{2(k-2)}\ne0$. Therefore, the integral in
(\ref{e:int}) is well defined, and it follows that $\Theta(\xi)$
is an analytic function of $\xi$ (in the sense that
$\Theta(\xi(\tau))$ is analytic for any analytic parametrization
$\xi(\tau)$).

Next, observe that $dz/w>0$ within our convention. Thus, applying
Lemma~\ref{lem:graph}, we conclude that $\Theta(\xi)$ is a
strictly increasing function when $\xi$ runs $\Gamma$ in clockwise
direction. Define function $f_k(\theta)$ by letting $
f_k(\Theta(\xi))=z(\xi), $ where $z(\xi)$ is the projection of
$\xi$ onto the $Oz$-axis. Clearly, $f_k(\theta)$ becomes a real
analytic periodic function, which is defined now in $\R{}$. By
virtue of the symmetry of $\Gamma$, the one-quoter period $T$
satisfies
\begin{equation}\label{equ:TTT} \frac{T}{4}:=\int_{A^+}^{B^{-}}
\frac{dz}{w}
\end{equation}
where $B^-=(-1,0)\in\Gamma$.

\begin{lem}\label{lem:per}
\begin{equation}\label{equ:period1}
T=2\pi\left(1-\frac{k-1}{\lambda }\right).
\end{equation}
\end{lem}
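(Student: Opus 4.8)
The plan is to compute the one-quarter period integral $\frac{T}{4}=\int_{A^+}^{B^-} \frac{dz}{w}$ directly, exploiting the explicit polar parametrization of $\Gamma$ established in Lemma~\ref{lem:graph}. Writing $z=r\cos\alpha$, $w=r\sin\alpha$ with $r=r(\alpha)$ given by the formula from the proof of Lemma~\ref{lem:graph}, the apex $A^+$ corresponds to $\alpha=0$ and the point $B^-=(-1,0)$ corresponds to $\alpha=\pi$ (here $r(\pi)=\lambda^{k-1}k^{-k}=z_0$ is not $1$, so I should first double-check the normalization — if $B^-=(-1,0)$ is really meant, one needs $C_3$ chosen so that $z_0=1$, i.e. we are on the specific curve where the apex sits at distance $1$; in any case the computation is the same up to this bookkeeping). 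Then $dz = (r'\cos\alpha - r\sin\alpha)\,d\alpha$ and $w = r\sin\alpha$, so $\frac{dz}{w} = \frac{r'\cos\alpha}{r\sin\alpha}\,d\alpha - d\alpha = \bigl(\frac{d}{d\alpha}\log r\bigr)\cot\alpha\,d\alpha - d\alpha$.

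The $-d\alpha$ piece integrates to $-\pi$ over $[0,\pi]$, which already produces the leading $2\pi$ after multiplying by $4$... wait, that overshoots, so the remaining term must contribute $-\pi\cdot\frac{k-1}{\lambda}\cdot\frac14\cdot 4$; more precisely I expect $\int_0^\pi (\log r)'\cot\alpha\,d\alpha = -\frac{\pi(k-1)}{2\lambda}\cdot 2$ — the exact constant will fall out of the calculation. The key computational step is therefore evaluating
\[
\int_0^\pi \frac{d}{d\alpha}\Bigl[\tfrac{k-1}{2}\log(\sin^2\alpha+\lambda^2\cos^2\alpha) - \tfrac{k}{2}\log(\sin^2\alpha+k^2\cos^2\alpha)\Bigr]\cot\alpha\,d\alpha.
\]
Differentiating, each logarithmic derivative has the form $\frac{(1-c^2)\sin\alpha\cos\alpha}{\sin^2\alpha+c^2\cos^2\alpha}$ for $c=\lambda$ or $c=k$, so after multiplying by $\cot\alpha$ I get integrands $\frac{(1-c^2)\cos^2\alpha}{\sin^2\alpha+c^2\cos^2\alpha}$. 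Substituting $t=\tan\alpha$ (or $t=\cot\alpha$) turns $\int_0^\pi \frac{(1-c^2)\cos^2\alpha\,d\alpha}{\sin^2\alpha+c^2\cos^2\alpha}$ into a standard rational integral equal to $\pi\bigl(\frac{1}{c}-1\bigr)$ (for $c>0$), using $\int_{-\infty}^{\infty}\frac{dt}{t^2+c^2}=\frac{\pi}{c}$. Assembling: the $c=\lambda$ term with coefficient $\frac{k-1}{2}$ gives $\frac{k-1}{2}\pi(\frac1\lambda-1)$, the $c=k$ term with coefficient $-\frac{k}{2}$ gives $-\frac{k}{2}\pi(\frac1k-1)=-\frac{\pi}{2}(1-k)=\frac{\pi(k-1)}{2}$, and these combine with the $-\pi$ from the $-d\alpha$ term. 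Multiplying the total by $4$ should collapse to $2\pi\bigl(1-\frac{k-1}{\lambda}\bigr)$.

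The main obstacle — and the part deserving genuine care rather than hand-waving — is the behavior at the endpoints $\alpha=0,\pi$, where both $\cot\alpha$ and the logarithmic derivative individually blow up or vanish, so the splitting into three separate integrals must be justified (each piece is in fact convergent because $(\log r)'\to 0$ like $\sin\alpha\cos\alpha$ near the apexes, killing the $\cot\alpha$ singularity — this is exactly the removable-singularity fact already proved before Lemma~\ref{lem:per}, so I can invoke it). A secondary subtlety is orientation and sign: the integral is taken clockwise, $dz/w>0$ along the path per the earlier discussion, and one must track that the parameter $\alpha$ increases from $0$ to $\pi$ consistently with this orientation; getting the sign of the $\cot\alpha$ term right is where an error would most easily creep in. Once the endpoint convergence is dispatched, everything else is the routine rational-function integration sketched above.
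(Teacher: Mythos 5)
Your polar-coordinate decomposition $\frac{dz}{w} = (\log r)'\cot\alpha\,d\alpha - d\alpha$ is a genuinely different route from the paper's: the paper instead sets $w=-zt$ (so $t$ is the negative slope from the origin), which gives the explicit parametrization (\ref{equ:z-w-t}) and collapses $\frac{dz}{w}$ to the partial-fraction form $\bigl(\frac{k}{t^2+k^2}-\frac{k-1}{t^2+\lambda^2}\bigr)dt$, with an elementary antiderivative and no removable-singularity bookkeeping at all. Your approach can be made to work, but the execution has two concrete errors that would make the final assembly fail.

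First, a factor-of-$2$ slip in the coefficients. Since
$\frac{d}{d\alpha}\log(\sin^2\alpha+c^2\cos^2\alpha)=\frac{2(1-c^2)\sin\alpha\cos\alpha}{\sin^2\alpha+c^2\cos^2\alpha}$,
the factor $2$ cancels the $\tfrac12$ in $\log r=\tfrac{k-1}{2}\log(\cdot)-\tfrac{k}{2}\log(\cdot)$, so after multiplying by $\cot\alpha$ the integrands carry coefficients $(k-1)$ and $-k$, not $\tfrac{k-1}{2}$ and $-\tfrac{k}{2}$. Second, the endpoint: $B^-=(-1,0)$ in (\ref{equ:TTT}) is a typo for $(0,-1)$ (set $z=0$ in (\ref{faz}) to see that $\Gamma$ meets the $w$-axis at $w=\pm1$; the paper's own parametrization sends $t\to\infty$ to $(0,-1)$). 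The one-quarter period runs from the $Oz$-axis to the $Ow$-axis, so your integral $\int_0^\pi$, taken apex-to-apex over the upper arc, is $\pm T/2$, not $T/4$, and the final multiplier is $2$ (with an orientation sign flip), not $4$. With both fixed the arithmetic closes: $\int_0^\pi(\log r)'\cot\alpha\,d\alpha=(k-1)\pi\bigl(\tfrac1\lambda-1\bigr)-k\pi\bigl(\tfrac1k-1\bigr)=\tfrac{(k-1)\pi}{\lambda}$, so $\int_0^\pi\frac{dz}{w}=\tfrac{(k-1)\pi}{\lambda}-\pi$, and reversing orientation and doubling gives $T=2\pi\bigl(1-\tfrac{k-1}{\lambda}\bigr)$. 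As you left the actual assembly as ``should collapse to,'' you would not have caught these with your stated coefficients: $4\bigl(\tfrac{(k-1)\pi}{2\lambda}-\pi\bigr)$ does not match the claimed formula.
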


\begin{proof}
Define a new variable $t$ by letting $w=-zt$. Then, applying
(\ref{faz}) we obtain the following parameterization of the arc
$A^{+}B^{-}$
\begin{equation}\label{equ:z-w-t}
\begin{split}
z(t)=&(t^2+\lambda^2)^{(k-1)/2}(t^2+k^2)^{-k/2}\\
w(t)=&-t(t^2+\lambda^2)^{(k-1)/2}(t^2+k^2)^{-k/2},
\end{split}
\end{equation}
When $t$ runs between $0$ and $+\infty$ the corresponding point
$\xi(t)=(z(t),w(t))$ runs $A^{+}B^{-}$ in the clockwise direction.
Moreover, we have
\begin{equation*}
\begin{split}
dz(t)&=t(t^2+\lambda^2)^{\frac{k-3}{2}}(t^2+k^2)^{-\frac{k+2}{2}}\biggl(k^2(k-1)-k\lambda^2-t^2\biggr)\;dt,\\
\end{split}
\end{equation*}
which yields
\begin{equation}\label{equ:Ostrogr}
\frac{dz(t)}{w(t)}
=\left(\frac{k}{t^2+k^2}-\frac{k-1}{t^2+\lambda^2}\right)dt.
\end{equation}
Integration of (\ref{equ:Ostrogr}) gives
\begin{equation}\label{equ:theta_def}
\Theta(z(t),w(t))=\arctan\frac{t}{k}-\frac{k-1}{\lambda }\arctan
\frac{t}{\lambda }.
\end{equation}
So, letting $t\to+\infty$ in (\ref{equ:theta_def}), we obtain by
virtue of (\ref{equ:TTT}): $
\frac{T}{4}=\frac{\pi}{2}\left(1-\frac{k-1}{\lambda }\right) $, and
(\ref{equ:period1}) follows. \qed \end{proof}

From the definition of $\Gamma$ we infer $
f_k'(\Theta(\xi))=w(\xi), $ where $z(\xi)$ is the projection of
$\xi$ onto the $Oz$-axis. In particular,
\begin{equation}\label{null}
f_k'(\theta)=0 \quad\Leftrightarrow \quad
\theta=\frac{Tn}{2},\quad n\in\Z{}.
\end{equation}
 Hence, the constructed function $f_k(\theta)$
satisfies (\ref{fte}) with the initial data
\begin{equation*}\label{equ:f'0}
f_k'(0)=0, \quad f_k(0)=z_0\equiv\frac{\lambda ^{k-1}}{k^{k}}.
\end{equation*}
It is obvious that (by virtue of the autonomic character of
(\ref{fte})) a general solution of (\ref{fte}) has the form $
f(\theta)=Cf_k(\theta+a), $ where $C$ and $a$ are arbitrary
constants.

\begin{cor}
Let $T$ be defined by (\ref{equ:period1}). For
$\theta\in(-T/4,T/4)$ we have the following parametrization for
$f_k(\theta)$
\begin{equation}\label{new-par1}
\begin{split}
f_k=&(t^2+\lambda^2)^{(k-1)/2}(t^2+k^2)^{-k/2}\\
\theta=&\arctan\frac{t}{k}-\frac{k-1}{\lambda }\arctan
\frac{t}{\lambda },\quad t\in\R{},
\end{split}
\end{equation}
and for other values of $\theta$, $f(\theta)$ satisfies the
symmetry rules:
$$
f_k(\theta)=-f_k(\frac{T}{2}-\theta), \qquad
f_k(-\theta)=f_k(\theta).
$$
\end{cor}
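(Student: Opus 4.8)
The plan is to read off the parametrization directly from the construction of $f_k$ carried out just before the corollary, and then to verify the two symmetry identities using the geometry of the curve $\Gamma$.

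First I would establish the parametrization on $(-T/4,T/4)$. In the proof of Lemma~\ref{lem:per} the substitution $w=-zt$ was used to parametrize the clockwise arc $A^+B^-$ of $\Gamma$ by \eqref{equ:z-w-t}, and the integral of $dz/w$ along this arc was computed in closed form in \eqref{equ:theta_def}, giving $\Theta(z(t),w(t))=\arctan(t/k)-\frac{k-1}{\lambda}\arctan(t/\lambda)$. By definition $f_k(\Theta(\xi))=z(\xi)$, so for $\xi=\xi(t)$ on this arc we get precisely $f_k=z(t)=(t^2+\lambda^2)^{(k-1)/2}(t^2+k^2)^{-k/2}$ with $\theta=\Theta(z(t),w(t))$ as above. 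It remains to check that as $t$ ranges over all of $\R{}$, $\theta$ ranges over exactly $(-T/4,T/4)$: at $t=0$ we get $\theta=0$, the derivative $d\theta/dt=\frac{k}{t^2+k^2}-\frac{k-1}{t^2+\lambda^2}$ from \eqref{equ:Ostrogr} is positive (this was already used to conclude $\Theta$ is strictly increasing, and positivity follows since the right-hand side of \eqref{denom}'s reasoning gives $dz/w>0$), and letting $t\to\pm\infty$ gives $\theta\to\pm\frac{\pi}{2}(1-\frac{k-1}{\lambda})=\pm T/4$ by Lemma~\ref{lem:per}. Hence $t\mapsto\theta$ is an increasing bijection $\R{}\to(-T/4,T/4)$ and \eqref{new-par1} follows.

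Next I would prove the two symmetry rules. Both come from symmetries of $\Gamma$: the defining relation \eqref{faz} is invariant under $(z,w)\mapsto(-z,-w)$ and under $(z,w)\mapsto(z,-w)$. For the first identity, $f_k(\theta)=-f_k(\tfrac{T}{2}-\theta)$: applying the central symmetry $\xi\mapsto-\xi$ of $\Gamma$ sends the arc from $A^+$ to a point $\xi$ onto the arc from $A^-=B^+$-type apex, reversing orientation; tracking the clockwise arc-length integral \eqref{e:int} and using that the half-period from $A^+$ to $A^-$ is $T/2$ (two quarter-periods, by \eqref{equ:TTT} and the symmetry of $\Gamma$), one gets $\Theta(-\xi)=\tfrac{T}{2}-\Theta(\xi)$, while $z(-\xi)=-z(\xi)$; combining these with $f_k(\Theta(\xi))=z(\xi)$ yields the claim. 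For the second identity, $f_k(-\theta)=f_k(\theta)$: the reflection $\xi=(z,w)\mapsto\hat\xi=(z,-w)$ preserves $\Gamma$ and the $Oz$-projection $z$, but reverses the orientation of the clockwise traversal, so $\Theta(\hat\xi)=-\Theta(\xi)$ (using $A^+$ is fixed and $dz/w$ is odd under $w\mapsto-w$); since $z(\hat\xi)=z(\xi)$, we get $f_k(-\theta)=f_k(\theta)$. This also matches \eqref{new-par1} directly, since $\theta(t)$ is an odd function of $t$ whereas $f_k$ depends only on $t^2$.

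I expect the only real subtlety — hence the "main obstacle" — to be bookkeeping the orientation and the additive constant in the arc-length parameter $\Theta$ when transporting it under the two symmetries of $\Gamma$, i.e. making precise that the clockwise arc from $A^+$ behaves as claimed under $\xi\mapsto-\xi$ and $\xi\mapsto(z,-w)$, and that $f_k$ extended to all of $\R{}$ by \eqref{null} and periodicity is consistent with these reflections across the apexes $A^{\pm}$ and across the points $B^{\pm}$. Once one fixes the convention that $\Theta$ is measured clockwise from $A^+$ with values increasing, the identities $\Theta(-\xi)=\tfrac{T}{2}-\Theta(\xi)$ and $\Theta(z,-w)=-\Theta(z,w)$ are forced, and the rest is immediate from $f_k\circ\Theta=z$. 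Everything else is the routine closed-form integration already performed in the proof of Lemma~\ref{lem:per}.
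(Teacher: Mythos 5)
Your overall strategy is the right one and matches what the paper implicitly intends: the corollary is stated without a proof because the first display is nothing but the parametrization \eqref{equ:z-w-t}--\eqref{equ:theta_def} already derived in the proof of Lemma~\ref{lem:per}, combined with the definition $f_k(\Theta(\xi))=z(\xi)$, and your bijectivity check (odd $\theta(t)$, positive derivative, limits $\pm T/4$) is exactly what is needed to see the range of $\theta$ is $(-T/4,T/4)$.

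There is, however, a sign error in your derivation of the first symmetry rule. The central inversion $\xi\mapsto-\xi$, i.e.\ $(z,w)\mapsto(-z,-w)$, is a rotation by $\pi$ and is therefore \emph{orientation-preserving}, not orientation-reversing as you claim; moreover $dz/w$ is invariant under it (both $dz$ and $w$ change sign). Tracking the clockwise integral one gets $\Theta(-\xi)=\tfrac{T}{2}+\Theta(\xi)$, which yields $f_k\bigl(\tfrac{T}{2}+\theta\bigr)=-f_k(\theta)$, not the stated identity. To obtain $f_k(\theta)=-f_k\bigl(\tfrac{T}{2}-\theta\bigr)$ directly, you should instead use the reflection $\sigma:(z,w)\mapsto(-z,w)$ across the $Ow$-axis: this \emph{does} reverse orientation, it sends the clockwise arc $A^+\to\xi$ to the clockwise arc $\sigma(\xi)\to A^-$, and $\sigma^*\bigl(\tfrac{dz}{w}\bigr)=-\tfrac{dz}{w}$, so $\Theta(\sigma(\xi))=\tfrac{T}{2}-\Theta(\xi)$ while $z(\sigma(\xi))=-z(\xi)$. (Your conclusion is nonetheless correct, since $f_k\bigl(\tfrac{T}{2}+\theta\bigr)=-f_k(\theta)$ together with the already-established evenness $f_k(-\theta)=f_k(\theta)$ and $T$-periodicity is equivalent to the stated rule; but as written the intermediate claim about orientation and the formula $\Theta(-\xi)=\tfrac{T}{2}-\Theta(\xi)$ are incorrect.) The argument for the second symmetry, via $(z,w)\mapsto(z,-w)$ or equivalently via oddness of $\theta(t)$ and evenness of $f_k$ in $t$, is fine.
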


Our further objective is to characterize all values of $k>1$ which
support the $2\pi$-periodic wave functions $f_k(\theta)$, that is
$ f_k(\theta+2\pi)=f_k(\theta). $ In fact, the latter condition is
equivalent to absence of multivalued branches of the corresponding
quasiradial solution of (\ref{main0}). The following assertion is
a direct consequence of Lemma~\ref{lem:per}.

\begin{prop}
\label{corol:period} Let $|\g|>1$ and $k\geq 1$. Then
$f_k(\theta+2\pi)=f_k(\theta)$
  if and only if the
following equality holds
\begin{equation}\label{equ:N-def}
\frac{k-1}{\lambda }=\frac{N-1}{N},\qquad N\in\N{},
\end{equation}
where $\N{}$ denotes the set of all positive integers. In this
case $f_k(\theta)$ is a $2\pi/N$-periodic function.
\end{prop}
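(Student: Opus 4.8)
The plan is to reduce the $2\pi$-periodicity condition $f_k(\theta+2\pi)=f_k(\theta)$ to an arithmetic relation on the one-quarter period $T$ computed in Lemma~\ref{lem:per}. First I would recall from the construction preceding the statement that $f_k$ is a real analytic periodic function with full period $2T$ (the one-quarter period being $T/4$), whose zeros of $f_k'$ occur exactly at $\theta=Tn/2$, $n\in\Z{}$, by \eqref{null}. Consequently $f_k$ is $2\pi$-periodic if and only if $2\pi$ is an integer multiple of the primitive period, i.e.\ $2\pi = 2T\cdot\frac{1}{N'}$ is impossible in general; more carefully, $f_k(\theta+2\pi)=f_k(\theta)$ for all $\theta$ holds iff $2\pi$ is a period, which (since the minimal positive period of a nonconstant analytic periodic function generates all periods) is equivalent to $2\pi = \frac{2T}{N}$... no: it is equivalent to $2T \mid 2\pi$ in the sense $2\pi/(2T)\in\N{}$, but one must be careful because the symmetry $f_k(\theta)=-f_k(\tfrac{T}{2}-\theta)$ does \emph{not} make $T$ a period. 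So the correct reduction is: $2\pi$ is a period of $f_k$ iff $2\pi = n\cdot(2T)$ for some $n\in\N{}$, equivalently $\pi/T\in\N{}$.

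Next I would substitute the value $T=2\pi\bigl(1-\frac{k-1}{\lambda}\bigr)$ from \eqref{equ:period1}. Writing $\frac{k-1}{\lambda}=:\beta$, we have $\beta\in(0,1)$ because $\lambda=\sqrt{k^2-\tfrac{2k}{\g+1}}>\sqrt{(k-1)^2}=k-1$ for $k>1$ and $|\g|>1$ (indeed $k^2-\tfrac{2k}{\g+1}-(k-1)^2 = 2k\bigl(1-\tfrac{1}{\g+1}\bigr)-1$, which I would check is positive in the relevant range, or alternatively argue directly that $T>0$ forces $\beta<1$). Then the periodicity condition $\pi/T\in\N{}$ reads $\frac{1}{2(1-\beta)}\in\N{}$, i.e.\ $2(1-\beta)=\frac{1}{N}$ for some $N\in\N{}$; but $2(1-\beta)\in(0,2)$ and must equal $1/N$, which gives $1-\beta=\frac{1}{2N}$. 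Hmm — this does not match \eqref{equ:N-def}. Let me recompute: the minimal period might be $T$ itself after all, because while $f_k$ is odd about $T/2$ it may still satisfy $f_k(\theta+T)=f_k(\theta)$ through the combined symmetries $f_k(-\theta)=f_k(\theta)$ and $f_k(\theta)=-f_k(\tfrac T2-\theta)$; composing these gives $f_k(\theta+T)=f_k(-(\theta+T))=-f_k(\tfrac T2+\theta+T)$, which is not obviously $f_k(\theta)$. The honest approach is therefore: the genuine period is $2T$, and $f_k(\theta+2\pi)=f_k(\theta)$ iff $2\pi\in 2T\Z{}$ \emph{or} $2\pi$ differs from such by the anti-period, i.e.\ one should track whether $2\pi$ lands on a full period or a half-period-with-sign. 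Since on $[0,2T]$ the function takes the value $z_0=f_k(0)$ only at $0$ and $2T$ (it attains $-z_0$ at $T$), matching values \emph{and} derivatives forces $2\pi = 2TN$ for integer $N\geq 1$, hence $\pi = TN$, hence $\pi = 2\pi N(1-\beta)$, giving $1-\beta = \frac{1}{2N}$ again. To recover the stated form $\beta = \frac{N-1}{N}$ one instead needs $1-\beta=\frac1N$, i.e.\ $\pi = \tfrac{T}{?}$; the discrepancy is a factor $2$ that I would resolve by re-examining the exact relation between $T$ as defined in \eqref{equ:TTT} (the \emph{quarter} period) and the full period, concluding that the full period equals $4\cdot\frac{T}{4}=T$ is the wrong reading and that in fact the content of \eqref{equ:TTT} makes $T$ itself the full period (so the "quarter period" language refers to $T/4$ being a quarter of $T$), whereupon $2\pi = TN$ yields $2\pi = 2\pi N(1-\beta)$... still the same. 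The cleanest fix: $f_k$ has period $T$ by the two symmetries together (odd in $\theta$, odd about $T/2$ $\Rightarrow$ period $T$ only if we also use evenness to get $f_k(\theta+T)=f_k(T-(\theta+T))\cdot(\text{sign})$...), and the honest statement is $2\pi/T\in\N{}$, giving $1 = N(1-\beta)$, i.e.\ $\beta = \frac{N-1}{N}$, which is exactly \eqref{equ:N-def}.

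So the structured proof I would write is: \textbf{(1)} From the construction, $f_k$ is analytic and periodic; its set of critical points is $\frac{T}{2}\Z{}$ by \eqref{null}, and using the two symmetry rules from the Corollary one shows the minimal period of $f_k$ equals $T$ (the values $\pm z_0$ and the derivative data on $[0,T)$ are not repeated inside that interval). \textbf{(2)} Hence $f_k(\theta+2\pi)=f_k(\theta)$ for all $\theta$ iff $2\pi$ is an integer multiple of $T$, i.e.\ $\frac{2\pi}{T}=N\in\N{}$. \textbf{(3)} Plug in \eqref{equ:period1}: $\frac{2\pi}{T}=\frac{1}{1-\frac{k-1}{\lambda}}=N$, which rearranges to $\frac{k-1}{\lambda}=\frac{N-1}{N}$, establishing \eqref{equ:N-def}. \textbf{(4)} In that case $T=2\pi/N$, and since the minimal period is $T$ we get that $f_k$ is $2\pi/N$-periodic, as claimed. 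The main obstacle — and the only subtle point — is step (1): carefully pinning down that the minimal period is exactly $T$ (not $2T$, and not $T/2$), which requires combining the evenness $f_k(-\theta)=f_k(\theta)$ with the sign-reversing symmetry $f_k(\theta)=-f_k(\tfrac T2-\theta)$ and checking that the pair $(f_k,f_k')$ returns to its initial value $(z_0,0)$ first at $\theta=T$; everything after that is algebraic manipulation of \eqref{equ:period1}.
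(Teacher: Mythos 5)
Your final structured argument (steps (1)--(4)) is correct and is exactly what the paper means by ``a direct consequence of Lemma~\ref{lem:per}'': the minimal period of $f_k$ is $T$, so $2\pi$-periodicity is equivalent to $2\pi/T=N\in\N{}$, which together with \eqref{equ:period1} rearranges to \eqref{equ:N-def}, and then $T=2\pi/N$ gives the last claim. The factor-of-two trouble in your intermediate discussion comes from misreading \eqref{equ:TTT}: there $T$ already denotes the \emph{full} period of $f_k$, and $T/4$ is one quarter of it, namely the $\Theta$-length of the arc from $A^+=(z_0,0)$ clockwise to the $Ow$-axis crossing $B^-=(0,-1)$ (the printed ``$(-1,0)$'' is a misprint; $(0,\pm1)$ are where $\Gamma$ meets the $Ow$-axis), which is a quarter of the full circuit around $\Gamma$ by its double symmetry. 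With that reading there is no dilemma: the two symmetry rules in the Corollary, $f_k(-\theta)=f_k(\theta)$ and $f_k(\theta)=-f_k(\tfrac T2-\theta)$, combine to the anti-period identity $f_k(\theta+\tfrac T2)=-f_k(\theta)$ and hence $T$-periodicity; and since \eqref{null} puts all critical points at $\tfrac T2\Z{}$ while $f_k(0)=z_0\ne-z_0=f_k(T/2)$, the minimal period is exactly $T$ (not $T/2$ and not $2T$). Your step (3) then delivers \eqref{equ:N-def} cleanly, and the bound $\tfrac{k-1}{\lambda}\in[0,1)$ (from $\lambda^2-(k-1)^2=2k\tfrac{\gamma}{\gamma+1}-1>0$ for $k\ge1$, $|\gamma|>1$) guarantees $N\ge1$, so the equivalence is genuine.
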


The latter statement is not new. It has appeared in \cite{Ar3} for
$\g=1$ and  for general $\gamma$ in \cite{Ar86}, \cite{Persen89}.
However, our approach to this result seems to be complementary to
these previous works and allows us to arrive at an explicit
representation for the quasiradial solutions in the next section.

\begin{prop}\label{corol:11}
Let $|\g|>1$ and $N\in\N{}$ be given. Then there exists a unique
$k=k(\g,N)\geq1$ such that (\ref{ft}) admits a $2\pi/N$-periodic
solution.
\end{prop}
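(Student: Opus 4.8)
The plan is to combine the explicit period formula of Lemma~\ref{lem:per} with Proposition~\ref{corol:period}, which already tells us that a $2\pi/N$-periodic solution exists precisely when the single scalar equation $\frac{k-1}{\lambda(k)}=\frac{N-1}{N}$ holds, where $\lambda(k)=\sqrt{k^2-\frac{2k}{\g+1}}$. So the whole proposition reduces to showing that, for fixed $|\g|>1$ and fixed $N\in\N{}$, this equation has exactly one root $k\geq 1$. First I would record the trivial case $N=1$: then the condition reads $k-1=0$, giving the unique root $k=1$, consistent with the remark in the introduction that $1$-solutions are linear. So assume $N\geq 2$.

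Next I would recast the equation in polynomial form. Squaring $\frac{k-1}{\lambda}=\frac{N-1}{N}$ and clearing denominators gives $N^2(k-1)^2=(N-1)^2\bigl(k^2-\frac{2k}{\g+1}\bigr)$, i.e.
\begin{equation*}
\bigl(N^2-(N-1)^2\bigr)k^2 - 2\Bigl(N^2-\frac{(N-1)^2}{\g+1}\Bigr)k + N^2=0,
\end{equation*}
which is exactly the characteristic equation (\ref{equ}) after multiplying through by $\g+1$; so the admissible $k$ are the roots of a genuine quadratic $Q(k)=(2N-1)(\g+1)k^2-2(N^2\g+2N-1)k+N^2(\g+1)$. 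The leading coefficient $(2N-1)(\g+1)$ is nonzero for $|\g|>1$, so $Q$ has two roots (counted with multiplicity). I would then argue: (i) both roots are real, because the squared relation forces $k^2-\frac{2k}{\g+1}\geq 0$ on any admissible $k$, and one checks the discriminant of $Q$ is a perfect-sign quantity — more cleanly, parametrize by $\lambda$ and note $k=1+\frac{N-1}{N}\lambda$ is an increasing affine function of $\lambda\geq 0$ while $\lambda^2=k^2-\frac{2k}{\g+1}$ is a fixed parabola, so geometrically the line meets the parabola in at most two points; (ii) exactly one of these intersection points has $k\geq 1$. For (ii) I would evaluate: at $k=1$ we have $Q(1)=(2N-1)(\g+1)-2(N^2\g+2N-1)+N^2(\g+1)=(\g+1)(N-1)^2 - 2(N-1)^2\cdot\frac{1}{?}$ — rather, substitute and simplify to get $Q(1)=-(\g-1)(N-1)^2$, which is negative for $\g>1$ (resp. positive for $\g<-1$, where one instead tracks signs carefully, or invokes the symmetry $\g\mapsto-\g$ mentioned in the introduction). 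Since the leading coefficient of $Q$ is positive when $\g>1$, $Q(1)<0$ forces exactly one root in $(1,\infty)$ and one root in $(-\infty,1)$; the latter must be discarded because any genuine solution of (\ref{fte}) needs $\lambda\geq 0$, i.e. $k=1+\frac{N-1}{N}\lambda\geq 1$. Hence the admissible root is unique, and we set $k(\g,N)$ to be that root — which is also the ``biggest root'' referred to in (\ref{equ:poly}).

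The main obstacle I anticipate is not the counting of roots, which is elementary, but the bookkeeping around the case $\g<-1$ and the need to verify that the algebraically admissible root of $Q$ actually corresponds to a bona fide solution of the original ODE (\ref{fte}) — i.e. that squaring introduced no spurious branch. Here I would lean on the constructive part of Section~\ref{subsec:22}: given the root $k=k(\g,N)\geq 1$, the curve $\Gamma$ from (\ref{faz}) is a real-analytic Jordan curve by Lemma~\ref{lem:graph}, the function $f_k$ built from $\Theta^{-1}$ is a genuine real-analytic solution of (\ref{fte}) (hence of (\ref{ft})) with period $T$ given by Lemma~\ref{lem:per}, and the relation $\frac{k-1}{\lambda}=\frac{N-1}{N}$ makes $T=2\pi/N$. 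Thus existence is furnished by the explicit construction and uniqueness by the quadratic count, completing the proof. A short remark would note the boundary behaviour $k(\g,N)\to 1$ as $N\to 1$ and $k(\g,N)\to\frac{2\g}{\g-1}\cdot\frac{1}{2}$-type limits as $N\to\infty$, but these are not needed for the statement.
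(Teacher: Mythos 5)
Your argument is correct and follows essentially the paper's route: reduce Proposition~\ref{corol:period} to the quadratic (\ref{k-l}) and show it has exactly one root $k\geq 1$. The paper establishes this by computing $(k_1-1)(k_2-1)=-\frac{(N-1)^2}{2N-1}\cdot\frac{\g-1}{\g+1}<0$ via Vi\'ete, while you evaluate $Q(1)=-(\g-1)(N-1)^2$ and compare its sign with that of the leading coefficient — these are the same computation (the Vi\'ete product is just $Q(1)$ divided by the leading coefficient), and both handle $\g>1$ and $\g<-1$ uniformly since $(\g-1)/(\g+1)>0$ for $|\g|>1$; your extra remark that the squaring step is harmless because existence is furnished by the explicit construction in Section~\ref{subsec:22} is a small but legitimate point of hygiene that the paper leaves implicit.
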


\begin{proof}
The trivial case $N=1$ by (\ref{equ:N-def}) gives $k=1$. Let
$N\geq 2$. Then (\ref{equ:N-def}) is equivalent to the following
quadratic equation
\begin{equation}\label{k-l}
(2N-1)(\g+1)k^2-2(N^2\g+2N-1)k+N^2(1+\g)=0
\end{equation}
which has two separate roots because it discriminant is strictly
positive:
$$
D=4(N-1)^2(N^2\g^2-2N+1)>4(N-1)^3>0.
$$
Then one can easily infer from the Vi\'ete theorem that
\begin{equation}\label{last}
(k_1-1)(k_2-1)=-\frac{(N-1)^2}{2N-1}\cdot\frac{\g-1}{\g+1}<0,
\end{equation}
where $k_1\ne k_2$ are the roots of (\ref{k-l}). Inequality
(\ref{last}) implies $k_1<1<k_2$, so that exactly one root $k_2>1$
is consistent with our  constraint $k>1$. \qed
\end{proof}

\section{$N$-solutions}

From now on, we adopt a new notation $f_N$ for the $N$-th wave
function $f_k$ with $k=k(\gamma,N)$.

\begin{defn}
Let $N\in\N{}$. The quasiradial solution of the form
$$
u_N(x,y):=C\rho^k f_N(\theta),
$$
where $C$ is an arbitrary constant, is said to be a \textit{basic}
$N$-solution of (\ref{main0}). Similarly, $u=C\rho^k
f_N(\theta+a)$ with an arbitrary $a\in\R{}$ is said to be a
(general) $N$-solution.
\end{defn}

\begin{thm}\label{lem:repres}
Let $|\gamma|>1$, $N\in\N{}$, and $k=k(\gamma,N)$ be the biggest
root of (\ref{k-l}). Then the basic $N$-solution has the following
representation
\begin{equation*}\label{equ:resh1}
\begin{split}
x&=h^{2N-1} ((k+\lambda)\cos \tau+(k-\lambda)\cos (2N-1)\tau),\\
y&=h^{2N-1} ((k+\lambda)\sin \tau-(k-\lambda)\sin (2N-1)\tau),\\
u_N&=Ch^{ k(2N-1)}\cos N\tau,
\end{split}
\end{equation*}
where $\lambda$ is defined by (\ref{equ:N-def}), and
$\tau\in[0;2\pi]$, $h>0$ are the variables of parametrization.
\end{thm}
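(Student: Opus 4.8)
The plan is to start from the corollary's explicit parametrization \eqref{new-par1} of the basic wave function $f_k$ and translate it into a parametrization of the graph $u_N = C\rho^k f_N(\theta)$ in Cartesian coordinates, then simplify the trigonometric expressions using the constraint \eqref{equ:N-def}. First I would record that on the fundamental sector $\theta\in(-T/4,T/4)$ we have, from \eqref{new-par1}, $f_k = (t^2+\lambda^2)^{(k-1)/2}(t^2+k^2)^{-k/2}$ and $\theta = \arctan(t/k) - \tfrac{k-1}{\lambda}\arctan(t/\lambda)$, with $t\in\R{}$. The natural move is to substitute $t = \lambda\tan\psi$ (or equivalently to pass back to the $r,\alpha$ description of $\Gamma$ from the proof of Lemma~\ref{lem:graph}), so that $\arctan(t/\lambda) = \psi$ and, using \eqref{equ:N-def} in the form $\tfrac{k-1}{\lambda} = \tfrac{N-1}{N}$, the angle becomes $\theta = \arctan\!\big(\tfrac{\lambda}{k}\tan\psi\big) - \tfrac{N-1}{N}\psi$. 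I would then introduce the parametrization variable $\tau$ by the linear relation $\tau = \theta + $ (a multiple of $\psi$), chosen precisely so that $N\tau$ and $(2N-1)\tau$ come out as the arguments appearing in the statement; tracking the algebra, the correct choice turns out to be dictated by writing $\arctan(\tfrac{\lambda}{k}\tan\psi) = \theta + \tfrac{N-1}{N}\psi$ and setting $\tau := \theta + \tfrac{1}{N}\psi = \tfrac{2N-1}{N}\cdot\tfrac{\psi}{?}$ — i.e. $\psi = N\tau - (N-1)\theta$, consistent with $\theta$ being one of the two "slow" combinations.

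Next I would handle the radial factor. Since $u_N = C\rho^k f_N(\theta)$ and $x+iy = \rho e^{i\theta}$, I need $\rho$ as a function of $\tau$. The key computational step is to show that the half-angle substitution collapses the product $\rho^k f_N$ and the vector $(x,y) = \rho(\cos\theta,\sin\theta)$ into the trigonometric polynomials in the statement after setting $h$ equal to an appropriate power of $(t^2+\lambda^2)$ or $(t^2+k^2)$. Concretely, one expects $\rho = h^{2N-1}\cdot|\,(k+\lambda)e^{i\tau} + (k-\lambda)e^{-i(2N-1)\tau}\,|$ up to the identification of $h$; expanding $|(k+\lambda)e^{i\tau}+(k-\lambda)e^{-i(2N-1)\tau}|^2 = (k+\lambda)^2 + (k-\lambda)^2 + 2(k^2-\lambda^2)\cos 2N\tau$ and comparing with $(t^2+\lambda^2)$, $(t^2+k^2)$ after the substitution should pin down $h$ and verify the formula for $x+iy$. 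The cleanest route is probably to write $x+iy = e^{i\phi}(\mu\zeta|\zeta|^{2(N-1)} + \bar\zeta^{2N-1})$ from \eqref{equ:poly} with $\zeta = h^{?}e^{i\tau}$, expand in real and imaginary parts, and check this matches; then $u_N = C|\zeta|^{k(2N-1)-N}\re\zeta^N$ becomes $Ch^{k(2N-1)}\cos N\tau$ after absorbing the modulus of $\zeta$ into $h$.

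I expect the main obstacle to be the bookkeeping that links the three parameters $t$ (from the corollary), $\psi$ (the half-angle), and $\tau$ (the final parameter), together with the precise power of $(t^2+k^2)$ vs. $(t^2+\lambda^2)$ that must be designated as $h^{2N-1}$ so that all three lines of the statement become simultaneously correct — the exponents $k(2N-1)$ and $2N-1$ have to emerge from $k/2$, $(k-1)/2$ and $\tfrac{k-1}{\lambda}=\tfrac{N-1}{N}$ in a consistent way. A secondary point requiring care is that \eqref{new-par1} only covers $\theta\in(-T/4,T/4)$; to get a parametrization valid for $\tau\in[0,2\pi]$ I would invoke the symmetry rules $f_k(\theta)=-f_k(T/2-\theta)$ and $f_k(-\theta)=f_k(\theta)$ from the corollary, together with the $2\pi/N$-periodicity from Proposition~\ref{corol:period}, and observe that the trigonometric polynomials on the right-hand side already encode exactly these symmetries (they are invariant under $\tau\mapsto\tau+2\pi$ and transform correctly under $\tau\mapsto-\tau$), so the identity, once established on the fundamental sector, extends to all $\tau$ by analytic continuation. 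Finally, positivity of $h$ is immediate since $h$ is an even power combination of real quantities, and the non-degeneracy $(k+\lambda)\pm(k-\lambda)\neq 0$ follows from $k>1$ and $0<\lambda<k$ (the latter from \eqref{lambda-def}).
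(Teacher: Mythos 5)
Your starting point and overall plan are the paper's: begin from the corollary's parametrization (\ref{new-par1}), change variables so that the arguments $N\tau$ and $(2N-1)\tau$ emerge, and then read off $h^{2N-1}$ from the resulting bracket. But the substitution you actually write down is inconsistent. With $t=\lambda\tan\psi$ the working choice is $\psi=N\tau$, i.e.\ $t=\lambda\tan(N\tau)$, which together with $\frac{k-1}{\lambda}=\frac{N-1}{N}$ from (\ref{equ:N-def}) yields the clean relation $\theta+(N-1)\tau=\arctan\bigl(\frac{\lambda}{k}\tan N\tau\bigr)$ and the companion formula $t=k\tan(\theta+(N-1)\tau)$. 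Your proposed $\tau:=\theta+\frac{1}{N}\psi$ (and the alternative $\psi=N\tau-(N-1)\theta$) are neither mutually consistent nor equal to $\psi=N\tau$ once you substitute $\theta=\arctan(\frac{\lambda}{k}\tan\psi)-\frac{N-1}{N}\psi$; as you yourself note, this bookkeeping is the crux, and it is not resolved in your sketch.

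The second, more substantial gap is that you never perform the trigonometric reduction that actually produces the claimed denominator. From the two formulas for $t$ one gets
\[
f_N=z_0\cos N\tau\,\left(\frac{\cos(\theta+(N-1)\tau)}{\cos N\tau}\right)^{k},
\]
and the addition identity $1-\tan p\,\tan(\beta-p)=\frac{1}{\cos^2 p}\cdot\frac{1}{1+\tan\beta\tan p}$ with $p=(N-1)\tau$ collapses $\cos(\theta+(N-1)\tau)$ to $\frac{2k\cos\theta\cos N\tau}{(k+\lambda)\cos\tau+(k-\lambda)\cos(2N-1)\tau}$; multiplying by $\rho^k$ and using $x=\rho\cos\theta$ then isolates $x$ and defines $h^{2N-1}$, after which $y$ is obtained by expanding $\tan\theta$, not by taking moduli. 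Your alternative of computing $|x+iy|$ and ``checking this matches (\ref{equ:poly})'' is circular, since (\ref{equ:poly}) is the Introduction's statement of precisely the result under proof; and even ignoring circularity, matching the modulus $\rho=|x+iy|$ determines $(x,y)$ only up to a rotation, so it cannot by itself yield the separate formulas for $x$ and $y$. Your closing remarks on extending from the fundamental range of $\tau$ to all of $[0,2\pi]$ via analyticity and the symmetry rules are correct and agree with how the paper finishes.
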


\begin{proof}
By virtue of (\ref{new-par1}) we have the following
parametrization for $f_N(\theta)$
\begin{equation}\label{new-par}
\begin{split}
f_N=&(t^2+\lambda^2)^{(k-1)/2}(t^2+k^2)^{-k/2}\\
\theta=&\arctan\frac{t}{k}-\frac{k-1}{\lambda }\arctan
\frac{t}{\lambda }.
\end{split}
\end{equation}
Define a new variable $\tau$ by
\begin{equation}\label{equ:N-tau}
t=\lambda \tan (N\tau), \quad
\tau\in(-\frac{\pi}{2N},\frac{\pi}{2N}).
\end{equation}
Then we have from (\ref{new-par}) and (\ref{equ:N-def})
\begin{equation}\label{equ:def-tau}
\theta+(N-1)\tau=\arctan\left(\frac{\lambda}{k}\tan
(N\tau)\right),
\end{equation}
which by (\ref{equ:N-tau}) yields
\begin{equation}\label{equ:N-t}
t=k\tan \left(\theta+(N-1)\tau\right).
\end{equation}
Inserting (\ref{equ:N-tau}) and (\ref{equ:N-t}) into the first
identity in (\ref{new-par}) we get
\begin{equation}\label{equ:f-N}
\begin{split}
f_N=&
(t^2+\lambda^2)^{\frac{k-1}{2}}(t^2+k^2)^{-\frac{k}{2}}=\\
=&z_0\left(1+\tan^2 (N\tau)\right)^{\frac{k-1}{2}}\,\left(1+ \tan
^2
(\theta+(N-1)\tau)\right)^{-\frac{k}{2}}=\\
=&z_0\cos N\tau\;\left(\frac{\cos(\theta+(N-1)\tau)}{\cos
N\tau}\right)^k,
\end{split}
\end{equation}
where $z_0$ is defined by (\ref{z0}). On the other hand,
\begin{equation}\label{equ:cos}
\cos(\theta+(N-1)\tau)=\cos\theta\cos(N-1)\tau\cdot[ 1-\tan
(N-1)\tau \cdot\tan \theta].
\end{equation}
Now, we apply to (\ref{equ:cos})  the addition formula
$$
1-\tan p  \cdot \tan(\beta-p )=\frac{1}{\cos^2p }\cdot
\frac{1}{1+\tan\beta\tan p }
$$
with $p =(N-1)\tau$, $\beta=\arctan\left(\frac{\lambda}{k}\tan
(N\tau)\right)$. By virtue of (\ref{equ:def-tau}), we have
$\theta=\beta-p$, therefore
\begin{equation}\label{equ:tan}
\begin{split}
1-&\tan (N-1)\tau \tan \theta=\frac{1}{\cos^2(N-1)\tau}\cdot
\frac{1}{1+\frac{\lambda}{k}\tan(N-1)\tau\tan N\tau}=\\
&=\frac{1}{\cos (N-1)\tau}\cdot\frac{k\cos N\tau}{k\cos
(N-1)\tau\cos
N\tau+\lambda \sin (N-1)\tau\sin N\tau}=\\
&=\frac{1}{\cos (N-1)\tau}\cdot\frac{2k\cos N\tau}{(k+\lambda)\cos
\tau+(k-\lambda)\cos (2N-1)\tau}.
\end{split}
\end{equation}
Then, applying (\ref{equ:cos}) and  (\ref{equ:tan}) to
(\ref{equ:f-N}) we obtain
\begin{equation*}
f_N=z_0\cos N\tau\;\left[ \frac{2k\cos\theta} {(k+\lambda)\cos
\tau+(k-\lambda)\cos (2N-1)\tau}\right]^k.
\end{equation*}
Taking into account that $u_N(x,y)=\rho^kf_N$ and
$x=\rho\cos\theta$ we find
\begin{equation*}\label{equ:resh0}
\begin{split}
u_N(x,y)&=z_0\cos N\tau\;\left[ \frac{2k\cos\theta}
{(k+\lambda)\cos
\tau+(k-\lambda)\cos (2N-1)\tau}\right]^k\rho^k=\\
&=(2k)^kz_0 \left[\frac{x} {(k+\lambda)\cos \tau+(k-\lambda)\cos
(2N-1)\tau}\right]^k
\end{split}
\end{equation*}
Setting $h^{2N-1}$ for the expression in the last brackets we
arrive at
$$
x=h^{2N-1} ((k+\lambda)\cos \tau+(k-\lambda)\cos (2N-1)\tau),
$$
and $ u_N=C_Nh^{k(2N-1)}\cos N\tau $, where
$C_N=(2k)^kz_0=2^k\lambda^{k-1}$.

Finally, to express $y$ we eliminate the polar coordinates as
follows
\begin{equation*}
\begin{split}
\frac{y}{x}=\tan \theta&=\frac{\lambda\tan N\tau-k\tan
(N-1)\tau}{k+\lambda \tan N\tau\tan (N-1)\tau }=\\
&=\frac{(k+\lambda)\sin \tau-(k-\lambda)\sin
(2N-1)\tau}{(k+\lambda)\cos \tau+(k-\lambda)\cos (2N-1)\tau}.
\end{split}
\end{equation*}
Thus we get (\ref{equ:resh1}) for all $\tau\in(-\pi/2N,\pi/2N)$.
Using the analyticity of $f_N(\theta)$ we conclude that
(\ref{equ:resh1}) is valid for all $\tau$. The theorem is proved
completely. \qed
\end{proof}

In order to simplify (\ref{equ:resh1}) we make use a special
intermediate parameter $\mu$. Namely, in the above notation, put
\begin{equation}\label{e:34}
\mu=\frac{k+\lambda}{k-\lambda},
\end{equation}
so $\lambda=k(\mu-1)/(\mu+1)$, and we have from (\ref{lambda-def})
\begin{equation}\label{equ:v}
k=\frac{(1+\mu)^2}{2\mu(\g+1)}.
\end{equation}
Then, an easy computation shows that (\ref{equ:N-def}) becomes
\begin{equation}\label{equ:N-v}
N=\frac{\mu^2-1}{2(\mu\g-1)}.
\end{equation}
Now we observe that for $\gamma>1$ we have $k>\lambda>0$, so that
$\mu>1$. Similarly, $\gamma<-1$ implies $\mu<-1$. On the other
hand, considering (\ref{equ:N-v}) as a quadratic equation for
$\mu$
\begin{equation}\label{mu}
F(\mu):=\mu^2-2\gamma N\mu+(2N-1)=0,
\end{equation}
we find $F(1)=2N(1-\gamma)$ and $F(-1)=2N(1+\gamma)$. If
$\gamma>1$ then $F(1)<0$, i.e. exactly one root of (\ref{mu})
$$
\mu^+=N\g+\sqrt{N^2\g^2-2N+1}
$$
agrees the above constrain $\mu>1$. Similarly, for $\gamma<-1$ we
have
$$
\mu^-=N\g-\sqrt{N^2\g^2-2N+1}.
$$
Define
\begin{equation}\label{equ:mu-g-n}
\mu\equiv \mu(\g,N)=\left\{%
\begin{array}{ll}
    N\g+\sqrt{\vphantom{\int}N^2\g^2-2N+1}, & \hbox{ for $\g>1$;} \\
    N\g-\sqrt{\vphantom{\int}N^2\g^2-2N+1}, & \hbox{ for $\g<-1$.} \\
\end{array}%
\right.
\end{equation}
For the further purpose, notice that
\begin{equation}\label{equ:xi-xi}
\mu(-\g,N)=-\mu(\g,N).
\end{equation}
Now, by using the homogeneity character of (\ref{equ:resh2}), one
can rewrite it as follows
\begin{equation}\label{equ:resh2}
\begin{split}
x&=h^{2N-1} (\mu\cos \tau+\cos (2N-1)\tau),\\
y&=h^{2N-1} (\mu\sin \tau-\sin (2N-1)\tau),\\
u_N&=Ch^{(2N-1)k}\cos N\tau.
\end{split}
\end{equation}

A general $N$-solution can be obtained from a certain basic
$N$-solution by a suitable rotation of  in the $(x,y)$ plane. Let
$$
x'=x\cos\psi+y\sin\psi, \qquad y'=-x\sin\psi+y\cos\psi,
$$
be such a rotation. Then (\ref{equ:resh2}) implies
\begin{equation}\label{equ:resh4}
\begin{split}
x&=h^{2N-1} (\mu\cos \tau+\cos ((2N-1)\tau+2N\psi)),\\
y&=h^{2N-1} (\mu\sin \tau-\sin ((2N-1)\tau+2N\psi)),\\
u&\equiv u_{N,\psi}=Ch^{(2N-1)k}\cos N(\tau+\psi),
\end{split}
\end{equation}
In particular, $u_{N,0}=u_N$. Thus, (\ref{equ:resh4}) gives the
representation for general $N$-\textit{solutions} to
(\ref{main0}).

By putting in (\ref{equ:resh4}) $X=h\cos\tau$ and $Y=h\sin\tau$,
we obtain an \textit{algebraic} representation of a basic
$N$-solution $u(x,y)$ (see also an equivalent complex form
(\ref{equ:poly}) given in the Introduction)
\begin{equation}\label{equ:reshPHI}
\begin{split}
x&=\mu X(X^2+Y^2)^{N-1}+\re (X+iY)^{2N-1},\\
y&=\mu Y(X^2+Y^2)^{N-1}-\im (X+iY)^{2N-1},\\
 u_N&=C(X^2+Y^2)^{\frac{k(2N-1)-N}{2}}\re (X+iY)^{N}.
\end{split}
\end{equation}

\begin{cor}
All $N$-solutions are quasialgebraic functions in the sense that
$u^{\alpha}_N$ is an algebraic function, where
$\alpha=\frac{2}{k(2N-1)-N}$.
\end{cor}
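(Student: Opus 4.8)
The plan is to read off the corollary directly from the algebraic parametrization \eqref{equ:reshPHI}. The point is that \eqref{equ:reshPHI} expresses $x$, $y$ and a suitable power of $u_N$ as honest polynomials in the two real parameters $X,Y$, so eliminating $X,Y$ produces a polynomial relation between $x,y$ and that power of $u_N$. Concretely, set $P=X^{2}+Y^{2}$ and raise the third equation to an integer power to clear the fractional exponent: since $u_N=C\,P^{(k(2N-1)-N)/2}\,\re(X+iY)^{N}$, we get
\[
u_N^{\,2}=C^{2}\,P^{\,k(2N-1)-N}\,\bigl(\re(X+iY)^{N}\bigr)^{2},
\]
and raising instead to the power $\alpha=\dfrac{2}{k(2N-1)-N}$ gives
\[
u_N^{\alpha}=C^{\alpha}\,P\,\bigl(\re(X+iY)^{N}\bigr)^{\alpha}.
\]
The exponent $\alpha$ is chosen exactly so that the factor $P^{(k(2N-1)-N)/2}$ becomes $P^{1}$, a polynomial in $X,Y$; but $(\re(X+iY)^N)^{\alpha}$ is still not obviously polynomial, so the cleaner route is to work with $u_N^{2}$ and the integer $m:=k(2N-1)-N$ when $m\in\mathbb{Z}$, and more generally to treat $u_N^{\alpha}$ as a new unknown $v$ satisfying $v^{1/\alpha}\sim$ (polynomial), i.e. $v$ itself is a polynomial function of $X,Y$ after the substitution — this is the substance of the claim.

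The cleanest formulation: introduce $v:=u_N^{\alpha}/C^{\alpha}$. Then from \eqref{equ:reshPHI},
\[
x=\mu X P^{N-1}+\re(X+iY)^{2N-1},\quad
y=\mu Y P^{N-1}-\im(X+iY)^{2N-1},\quad
v=P\cdot g(X,Y),
\]
where $g(X,Y)=\bigl(\re(X+iY)^{N}\bigr)^{\alpha}$ is \emph{not} yet polynomial — so instead one should observe that $v^{1/\alpha}=P^{1/\alpha}\re(X+iY)^N$ only after noticing $P^{1/\alpha}=P^{(k(2N-1)-N)/2}$, which returns us to $u_N$. The honest resolution, which I expect to be the real content, is: the three functions $x(X,Y),\ y(X,Y),\ w(X,Y):=\bigl(v(X,Y)\bigr)^{?}$ are simultaneously rational (indeed polynomial) in $X,Y$ precisely for the power $\alpha$, because $P$ is the unique ``radical part'' appearing, and then a resultant/elimination argument applies. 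So the steps are: (1) set $w=P\cdot h^{-2(2N-1)}$ wait — rather, use that along the parametrization $h^{2N-1}$ from \eqref{equ:resh2} and $X=h\cos\tau$, $Y=h\sin\tau$ give $P=h^2$, hence $u_N^{\alpha}=C^{\alpha}h^{2}\cos^{\alpha}(N\tau)$ is \emph{not} polynomial in $\tau$ either — so ultimately the statement is that $u_N^{\alpha}$, viewed through the substitution $X=\re\zeta$, $Y=\im\zeta$, becomes a polynomial in $X,Y$ together with $x,y$, and one invokes the standard fact that a function given parametrically by polynomials is algebraic (elimination of parameters via resultants).

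Therefore the proof I would write is short: by Theorem~\ref{lem:repres} and the derived form \eqref{equ:reshPHI}, the point $(x,y,u_N^{\alpha}/C^{\alpha})$ lies, for every choice of the parameters $(X,Y)\in\mathbb{R}^{2}$, on the image of the polynomial map
\[
(X,Y)\;\longmapsto\;\Bigl(\mu XP^{N-1}+\re(X+iY)^{2N-1},\ \mu YP^{N-1}-\im(X+iY)^{2N-1},\ P\bigl(\re(X+iY)^{N}\bigr)^{\alpha}\Bigr),
\]
and the third coordinate is a polynomial in $X,Y$ exactly because $\alpha\bigl(k(2N-1)-N\bigr)/2=1$ forces the only fractional power $P^{(k(2N-1)-N)/2}$ to collapse to $P$, while $\re(X+iY)^N$ raised to... — here is the gap I must close, and I will close it by instead squaring: replace $\alpha$ by $2\alpha$ is not needed since the statement as given has $\alpha=2/(k(2N-1)-N)$, so $P^{(k(2N-1)-N)/2\cdot 2/(k(2N-1)-N)}=P$, giving $u_N^{\alpha}=C^{\alpha}P\,(\re(X+iY)^N)^{\alpha}$; this last factor is handled by noting it equals $\bigl((\re(X+iY)^N)^2\bigr)^{\alpha/2}$ and iterating, or, decisively, by passing to $u_N^{2}=C^{2}P^{k(2N-1)-N}(\re(X+iY)^N)^2$ when $k(2N-1)-N\in\mathbb{N}$ (the algebraic case) and otherwise clearing the denominator $q$ of the rational $k$ to make all exponents integral. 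The \textbf{main obstacle}, then, is purely bookkeeping of the fractional exponents: one must verify that for the specified $\alpha$ all occurrences of $(X^2+Y^2)$ and of $\re(X+iY)^N$ are raised to integer powers after passing to $u_N^{\alpha}$ (equivalently, that the greatest common divisor of the relevant exponents divides $2$), after which eliminating $X$ and $Y$ by two successive resultants between the three polynomial relations yields a nontrivial polynomial $Q(x,y,u_N^{\alpha})\equiv0$, establishing that $u_N^{\alpha}$ is algebraic.
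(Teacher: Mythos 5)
You have correctly located the one real difficulty, but the proposal never resolves it, and in fact it cannot be resolved without an extra hypothesis. Raising the third line of (\ref{equ:reshPHI}) to the power $\alpha=\frac{2}{k(2N-1)-N}$ gives
\[
u_N^{\alpha}=C^{\alpha}\,(X^2+Y^2)\,\left(\re(X+iY)^N\right)^{\alpha},
\]
so the chosen $\alpha$ trivializes only the radial factor $(X^2+Y^2)^{(k(2N-1)-N)/2}$; the angular factor $\re(X+iY)^N$ is now itself raised to the power $\alpha$. Your write-up circles this point repeatedly (``here is the gap I must close'', ``the main obstacle is purely bookkeeping of the fractional exponents'') and ends by asking the reader to ``verify that \ldots $\re(X+iY)^N$ [is] raised to integer powers'', but that verification would fail: for generic $|\gamma|>1$ and $N\geq2$ the root $k=k(\gamma,N)$ of (\ref{k-l}) is irrational, hence so is $\alpha$, and a non-constant polynomial raised to an irrational power is never algebraic over $\mathbb{R}(X,Y)$. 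Since $X,Y$ are algebraic over $\mathbb{R}(x,y)$ by the first two lines of (\ref{equ:reshPHI}), it follows that $u_N^{\alpha}$ is not algebraic over $\mathbb{R}(x,y)$ when $k\notin\mathbb{Q}$.

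Thus the argument --- and the corollary read literally --- needs the additional hypothesis $k(\gamma,N)\in\mathbb{Q}$ (equivalently $\alpha\in\mathbb{Q}$), which by Lemma~\ref{lem:algebra} is exactly the case of algebraic $N$-solutions; under that hypothesis the resultant elimination you sketch at the very end (clear the denominator of $\alpha$, obtain three polynomial relations in $X,Y$, eliminate $X,Y$) is routine and correct, and is essentially the argument the paper leaves implicit after (\ref{equ:reshPHI}). What (\ref{equ:reshPHI}) yields with no rationality assumption is only the weaker factorization $u_N^{\alpha}=C^{\alpha}A(x,y)\,B(x,y)^{\alpha}$ with $A=X^2+Y^2$ and $B=\re(X+iY)^N$ both algebraic over $\mathbb{R}(x,y)$, $\alpha$ being calibrated so that the radial factor $A$ carries exponent one. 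Either add the hypothesis $k\in\mathbb{Q}$ and finish with resultants, or state and prove only this weaker factorization; as written, the proposal does neither and leaves the key step unverified.
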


To illustrate the last property, we briefly mention the following
well-known example. Note (see, also \cite{Ar3} and \cite{JLM} for
further examples) that $ u_2=x^{4/3}-y^{4/3} $ is a basic
$2$-solution of (\ref{equ:aron}). Then it is easily verified that
$u\equiv u_2(x,y)$ satisfies the following polynomial identity
$$
27x^4y^4u^3=(x^4-y^4-u^3)^3.
$$

%
%
%
%

\section{Conjugate solutions} \label{subsec:conj} We recall that
the main equation (\ref{main0}) can be represented as a
$p$-Laplace equation for $ p=\frac{2\g}{\g-1}.$ It is well known
fact (cf. \cite{Ar91}, \cite{ArLin}) that in two-dimensional case
there is the canonical correspondence between $p$-harmonic and
$p'$-harmonic functions for
$$
\frac{1}{p}+\frac{1}{p'}=1.
$$
More precisely, given  a solution $u$ of equation
$L_{\gamma}[u]=0$ we define the \textit{conjugate} function $U$ by
\begin{equation}\label{equ:U-u}
U_x=|\nabla u|^{2/(\g-1)}u_y, \qquad U_y=-|\nabla
u|^{2/(\g-1)}u_x, \qquad U(0)=0.
\end{equation}
Note that  $U(x,y)$ is not necessarily a single valued function.
But at least locally,  $U$ is a quasiradial solution of the
\textit{conjugated} equation
\begin{equation}
L_{-\gamma}[U]=0 \label{mainU}.
\end{equation}

It turns out that there is a simple relation between the conjugate
\textit{quasiradial} solutions. We define an \textit{adjoint} (to
basic one) $N$-solution $u_N^*$ as follows
\begin{equation}\label{equ:resh_adj}
\begin{split}
x&=h^{2N-1} (\mu\cos \tau-\cos ((2N-1)\tau)),\\
y&=h^{2N-1} (\mu\sin \tau+\sin ((2N-1)\tau)),\\
u^*_{N}&=Ch^{(2N-1)k}\sin N\tau.
\end{split}
\end{equation}
Applying (\ref{equ:resh4}), we obtain an equivalent definition:
$$
u^*_{N}=u_{N,-\pi/2N}.
$$
We have also $ u^{**}_{N}=-u_{N}$. The functions $u_N$ and $u^*_N$
form a \textit{conjugate} pair, analogous to conjugate harmonic
functions. More precisely, if $\gamma=1$ one can easily derive
from the above representation that $u_N(x,y)=\re(x+iy)^N$ and
$u^*_N(x,y)=\im(x+iy)^N$.

\begin{thm}\label{theo:conjugate}
Let $u_N$ be a basic $N$-solution of (\ref{main0}) and $U^*_N$ be
an adjoint $N$-solution of (\ref{mainU}). Then there is a constant
$c$ such that $cu_N$ and $U^*_N$ form the conjugate pair in the
sense (\ref{equ:U-u}).
\end{thm}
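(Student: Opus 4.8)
The plan is to verify the conjugacy relations (\ref{equ:U-u}) directly in the common parametrization, exploiting the fact that both $u_N$ and $U^*_N$ are given by explicit parametric formulas. First I would write down the adjoint $N$-solution of the conjugated equation $L_{-\gamma}[U]=0$ using (\ref{equ:resh_adj}) with $\gamma$ replaced by $-\gamma$; by (\ref{equ:xi-xi}) the intermediate parameter transforms as $\mu(-\gamma,N)=-\mu(\gamma,N)$, so the parametrizing formulas for $U^*_N$ have the same shape as those for $u_N$ but with $\mu$ negated and with a sign change in the $h^{2N-1}$-coefficient of the harmonic part, together with $\cos N\tau$ replaced by $\sin N\tau$. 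The key observation is that the maps $(h,\tau)\mapsto(x,y)$ defining $u_N$ and $U^*_N$ are \emph{the same} map of the $(x,y)$-plane (after possibly relabelling $\tau$), so both functions are expressed over a common coordinate chart, and the conjugacy identities reduce to identities between explicit trigonometric expressions in $h$ and $\tau$.

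Next I would compute $\nabla u_N$ and $\nabla U^*_N$ in the $(h,\tau)$-coordinates. Since $x$ and $y$ are given as functions of $(h,\tau)$, the chain rule gives $u_x,u_y$ as ratios of the Jacobian minors; the Jacobian $\partial(x,y)/\partial(h,\tau)$ can be computed in closed form and factors nicely (one expects something proportional to a power of $h$ times a trigonometric polynomial in $\tau$ that is in fact a constant, because the map is a diffeomorphism away from the origin). Using $\lambda=k(\mu-1)/(\mu+1)$ and the defining relations (\ref{equ:v}), (\ref{equ:N-v}) between $k$, $\lambda$, $\mu$, $\gamma$, $N$, I would then evaluate $|\nabla u_N|^2=u_x^2+u_y^2$ and check that $|\nabla u_N|^{2/(\gamma-1)}$ equals the appropriate power of $h$ (up to a constant); this is where the precise value of the homogeneity exponent $k(2N-1)-N$ in (\ref{equ:reshPHI}) enters. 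Finally I would verify the two scalar equations $cU^*_{N,x}=|\nabla u_N|^{2/(\gamma-1)}u_{N,y}$ and $cU^*_{N,y}=-|\nabla u_N|^{2/(\gamma-1)}u_{N,x}$ by direct substitution, which fixes the constant $c$; the normalization $U(0)=0$ is automatic since all the functions vanish at $h=0$. It then remains to note that, although $U^*_N$ may be multivalued in general, the local conjugacy relation (\ref{equ:U-u}) is exactly what is being asserted, so no global single-valuedness is needed.

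An alternative, more structural route would be to argue by analytic continuation from the case $\gamma=1$: there $u_N=\re(x+iy)^N$ and $u^*_N=\im(x+iy)^N$ manifestly form a conjugate pair, both sides of (\ref{equ:U-u}) are real-analytic in $\gamma$ (through the dependence of $k$, $\lambda$, $\mu$ on $\gamma$ via (\ref{equ:mu-g-n})), and the conjugacy identity, being a polynomial relation among the parametrized quantities, persists for all $|\gamma|>1$. I would likely include this remark but carry out the honest computation as the main proof.

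The main obstacle will be the bookkeeping in the gradient computation: expressing $u_{N,x},u_{N,y}$ via the inverse Jacobian and then simplifying $u_{N,x}^2+u_{N,y}^2$ to a clean monomial in $h$ requires repeated use of the trigonometric collapse that already appeared in the proof of Theorem~\ref{lem:repres} (the identities combining $\cos\tau$, $\cos(2N-1)\tau$, $\sin\tau$, $\sin(2N-1)\tau$ into products with $\cos N\tau$, $\cos(N-1)\tau$), together with the algebraic relations among $k$, $\lambda$, $\mu$. If these are handled as in Section~\ref{subsec:22}, everything telescopes, but getting the exponent $2/(\gamma-1)$ to match $k(2N-1)-N$ correctly — and in particular checking that the $\tau$-dependence cancels out of $|\nabla u_N|$ entirely — is the delicate point where an error would most easily creep in.
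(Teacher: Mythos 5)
Your approach is genuinely different from the paper's and, with some care in the bookkeeping, would succeed, but one of the intermediate claims you lean on is not correct as stated.

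The paper does not verify the relations (\ref{equ:U-u}) directly. Instead it \emph{defines} $U$ abstractly as the stream function of $u_N$, observes that homogeneity of $u_N$ forces $U=r^{\beta}G(\theta)$, computes $\beta=(k-1)\frac{\gamma+1}{\gamma-1}+1$, and then invokes the algebraic identity $k(1+\gamma)+k^*(1-\gamma)=2$ (eq.~(\ref{equ:k-+}), itself a consequence of (\ref{equ:v}) and $\mu^*=-\mu$) to conclude $\beta=k^*$, so $U$ is an $N$-solution $U_{N,\psi}$ of (\ref{mainU}). It then pins down the phase $\psi$ by a short geometric argument: at a point where $U$ vanishes, Euler's identity makes $\nabla U$ tangent to the circle, orthogonality of $\nabla u_N$ and $\nabla U$ then forces $\nabla u_N$ to be radial, so $f_N'(\theta_0)=0$, which by (\ref{null}) locates $\theta_0$ on the lattice $\pi n/N$ and gives $N\psi\equiv\pi/2\bmod\pi$. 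This is shorter and cleaner than brute-force verification; what it buys is that no explicit inverse-Jacobian computation is required, only the growth-exponent relation and the structure of the critical set of $f_N$. Your route buys transparency — everything is an explicit trigonometric identity in the common $(h,\tau)$ chart — at the cost of considerably more trigonometric bookkeeping, and it does use a genuine additional fact (that $|\nabla u_N|$ is a pure power of $h$, i.e.\ $\tau$-independent) that the paper's argument does not need. Your observation that the two charts coincide up to $\tau\mapsto\tau+\pi$ and a sign, via $\mu(-\gamma,N)=-\mu(\gamma,N)$, is correct and essential for the direct method.

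One specific claim in your outline is wrong and should be dropped: the Jacobian $\partial(x,y)/\partial(h,\tau)$ is \emph{not} a power of $h$ times a constant. A direct computation from (\ref{equ:resh2}) gives
\begin{equation*}
J=(2N-1)\,h^{4N-3}\bigl(\mu^{2}-2N+1-2(N-1)\mu\cos 2N\tau\bigr),
\end{equation*}
which for $N\geq 2$ has a genuine $\tau$-dependence; a diffeomorphism only requires $J\neq 0$ (and indeed $|\mu|>2N-1$ guarantees $\mu^{2}-2N+1\pm 2(N-1)\mu>0$), not $J=\mathrm{const}$. Fortunately this does not sink the plan: when you express $u_x,u_y$ and $U_x,U_y$ through the inverse Jacobian and substitute into (\ref{equ:U-u}), the common factor $1/J$ cancels from both sides, so the $\tau$-dependent piece of $J$ is harmless. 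You should make this cancellation explicit rather than assert a constancy that does not hold. With that repair, and the verification (which does go through) that $k^{2}f_N^{2}+f_N'^{2}$ contributes exactly $(\mu+1)^{2(k-1)}/k^{2(k-1)}$ times $(\mu^2+1+2\mu\cos 2N\tau)^{-(k-1)}$ — so that $|\nabla u_N|^{2}$ really is a monomial in $h$ — the rest of your computation can be carried out to completion and matches the constant $c$ as asserted.
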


\begin{proof}
Without loss of generality, we can assume $N\geq2$. Let $u_N$ be
an $N$-solution of (\ref{main0}) and $U$ be the corresponding
conjugate function defined by (\ref{equ:U-u}), normalized by
$U(0)=0$. It follows from homogeneity of $u_N$ that $U$ is a
homogeneous function as well. Hence, there is a real $\beta$ such
that
\begin{equation}\label{G}
U=r^{\beta}G(\theta).
\end{equation}
Here $G(\theta)$ is a priori a multivalued function. We will prove
that $U=U^*_N$ with a suitable constant $C$ in
(\ref{equ:resh_adj}).

First, notice that $U$ is a quasiradial solution of (\ref{mainU}).
Denote by $k$ the growth exponent of $u_N$. Then $k>1$ and the
components of the gradient $\nabla u_N$ are homogeneous functions
of order $(k-1)$. Moreover,
\begin{equation*}\label{equ:grad-rho}
|\nabla u_N|^2= r^{2k-2}(k^2f_N^2(\theta)+f_N'^2(\theta)).
\end{equation*}
In particular, $\nabla u_N\ne0$ for  $r\ne 0$. Applying
(\ref{equ:U-u}) gives
$$
\nabla U=\begin{bmatrix}
  U'_x \\
  U'_y\\
\end{bmatrix}=
|\nabla u_N|^{2/(\g-1)}
\begin{bmatrix}
  u'_{N,y} \\
  -u'_{N,x} \\
\end{bmatrix}=
r^{(k-1)\frac{\g+1}{\g-1}}\begin{bmatrix}
   G_1(\theta)\\
   G_2(\theta)
   \end{bmatrix},
$$
where $G_i(\theta)$ are certain $2\pi$-periodic functions of
$\theta$. Since $|\g|>1$ and $k>1$,
$$
\beta =(k-1)\frac{\g+1}{\g-1}+1>1.
$$

On the other hand, let $k^*$ be the growth exponent of  $U^*_N$.
Let $\mu$ and $\mu^*$ be the corresponding auxiliary parameters
defined by (\ref{e:34}) for $k$ and $k^*$ respectively. Then it
follows that from (\ref{equ:xi-xi}) that $\mu^*=-\mu$, and from
(\ref{equ:v}) we have
$$
k=\frac{(1+\mu)^2}{2(1+\g)\mu}
$$
and
$$
k^*=\frac{(1+\mu^*)^2}{2(1+\g^*)\mu^*}=\frac{(1-\mu)^2}{2(\g-1)\mu},
$$
where $\gamma^*=-\gamma$. Hence
\begin{equation}\label{equ:k-+}
k(1+\g)+k^*(1-\g)\equiv k(1+\g)+k^*(1+\g^*)=2.
\end{equation}
Thus, by (\ref{equ:k-+}) we have
\begin{equation*}
\begin{split}
\beta&=(k-1)\frac{\g+1}{\g-1}+1=\frac{k(\g+1)}{\g-1}-\frac{\g+1}{\g-1}+1=\\
&=\frac{2-k^*(-\g+1)}{\g-1}-\frac{2}{\g-1}=k^*.
\end{split}
\end{equation*}
The latter means that $U$ is an $N$-solution of (\ref{mainU}) with
$ \beta=k^*$ (in particular, the function $G$ in (\ref{G}) is a
single valued function).

It remains only to show that $U$ is an \textit{adjoint}
$N$-solution. Since $U$ is an $N$-solution, there is $\psi$ such
that  $U=U_{N,\psi}$ in representation (\ref{equ:resh4}). Thus, we
have to prove that $N\psi=\pm\pi/2 \mod \pi$.

Choosing $\tau_0=-\psi+\pi/2N$, $h=1$ in representation
(\ref{equ:resh4}) for $U=U_{N,\psi}$, we obtain the corresponding
Cartesian coordinates $(x_0,y_0)\ne 0$. Then $U(x_0,y_0)=0$ and
applying the Euler theorem on homogeneous functions yields
\begin{equation}\label{coll}
 x_0\cdot {U'}_{x}(x_0,y_0)+
y_0\cdot\frac{\partial U}{\partial y}(x_0,y_0)=k^*U(x_0,y_0)=0.
\end{equation}
Hence, the gradient  $\nabla U(x_0,y_0)$ is orthogonal to the
radius vector $\nabla r(x_0,y_0)$ of the point $(x_0,y_0)$. On the
other hand, one can readily see from (\ref{equ:resh4}) that
\begin{equation*}
\begin{split}
x_0&=(\mu^*-1)\cos\left(\frac{\pi}{2N}-\psi\right),\\
y_0&=(\mu^*-1)\sin\left(\frac{\pi}{2N}-\psi\right),\\
\end{split}
\end{equation*}
hence we have for the polar angle of the point $(x_0,y_0)$
\begin{equation}\label{equ:compar}
\theta_0 \equiv \frac{\pi}{2N}-\psi \mod \pi.
\end{equation}
By (\ref{equ:U-u}), the gradients of $u_N$ and $U$ are mutually
orthogonal. Then from (\ref{coll}) we infer that  vectors $\nabla
u_N(x_0,y_0)$ and $\nabla r(x_0,y_0)$ are collinear. Since
$$
\nabla u_N=f_N(\theta)r^{k-1}\nabla r+r^{k}f_N'(\theta)\nabla
\theta,
$$
and
$$
\scal{\nabla\theta}{\nabla r}=0,
$$
we conclude that $f_N'(\theta_0)=0$. Hence, by (\ref{null}) there
is $n\in\Z{}$ such that
$$
\theta_0=\frac{Tn}{2}=\frac{\pi n}{N},
$$
which by (\ref{equ:compar}) yields $ N\psi\equiv \frac{\pi}{2} \mod
\pi, $ and the theorem follows.   \qed \end{proof}

\section{Algebraic $N$-solutions}
\label{sec:algebraic}


In this section we settle the following question: \textsl{For
which rational numbers $\g\in\Q{}$ such that $|\g|>1$, equation
(\ref{main0}) does admit nontrivial (i.e. $N\geq 2$) algebraic
solutions?} Clearly, algebraicity of a general $N$-solution is
equivalent to that property for  basic $N$-solutions. Therefore,
in what follows we cofine ourselves  by only basic $N$-solutions.

\begin{lem}
\label{lem:raz} Let $\g\in \Q$, $|\g|>1$ and $N\in\N{}$. Then
inclusions $k(\gamma,N)\in \Q$, $\lambda(\gamma,N)\in \Q$ and
$\mu(\gamma,N)\in \Q$ are pairwise equivalent.
\end{lem}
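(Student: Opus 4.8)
The relations in the excerpt give us explicit rational expressions linking the three quantities, so the strategy is to exploit these algebraic identities directly. First recall from $(\ref{e:34})$ that $\mu=(k+\lambda)/(k-\lambda)$, which is a rational function of $k$ and $\lambda$; conversely $\lambda=k(\mu-1)/(\mu+1)$ expresses $\lambda$ rationally in terms of $k$ and $\mu$, while $(\ref{equ:v})$ gives $k=(1+\mu)^2/(2\mu(\g+1))$, expressing $k$ rationally in terms of $\mu$ (once $\g\in\Q$ is fixed). Since $\g\in\Q$ is assumed throughout, each of these is a rational (indeed $\Q$-rational) function in the other variables with coefficients in $\Q$. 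Hence whenever one of $k$, $\lambda$, $\mu$ lies in $\Q$, the remaining two do as well, provided no spurious denominator vanishes — and the denominators $k-\lambda$, $\mu+1$, $\mu$, $\g+1$ are all nonzero for $|\g|>1$, $k>1$ (this is exactly where we invoke $\lambda<k$ from $(\ref{equ:N-def})$ with $N\geq 2$, and $\mu>1$ or $\mu<-1$ from the discussion following $(\ref{e:34})$).

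Concretely, I would organize the argument as a cycle of implications. \emph{Step 1:} $\mu\in\Q\Rightarrow k\in\Q$, immediate from $(\ref{equ:v})$. \emph{Step 2:} $k\in\Q\Rightarrow\lambda\in\Q$: from $(\ref{lambda-def})$ we have $\lambda^2=k^2-2k/(\g+1)\in\Q$, but this only gives $\lambda\in\Q$ up to a square root, so instead use $(\ref{equ:N-def})$, $\lambda=(k-1)N/(N-1)$, which is manifestly rational since $k,N\in\Q$; this handles $N\geq 2$, and $N=1$ forces $k=1$, $\lambda=0$ trivially. \emph{Step 3:} $\lambda\in\Q\Rightarrow\mu\in\Q$: from $(\ref{e:34})$, $\mu=(k+\lambda)/(k-\lambda)$, and $k=1+\lambda(N-1)/N\in\Q$ once $\lambda\in\Q$, and $k-\lambda\neq 0$ since $\mu$ is finite. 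Closing the cycle $\mu\to k\to\lambda\to\mu$ establishes that the three inclusions are pairwise equivalent.

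The only subtlety — and the one genuine point to be careful about — is the potential loss of information when passing through a squared quantity such as $\lambda^2=k^2-2k/(\g+1)$; a priori $\lambda^2\in\Q$ does not yield $\lambda\in\Q$. The plan circumvents this entirely by routing every implication through the \emph{linear} relation $(\ref{equ:N-def})$ between $k$ and $\lambda$ (equivalently $(k-1)/\lambda=(N-1)/N$) rather than through $(\ref{lambda-def})$, so that no square roots are ever extracted. Once that observation is in place the proof is a short chain of substitutions with the side conditions $k>1$, $|\g|>1$, $N\geq 2$ guaranteeing nonvanishing denominators, and the degenerate case $N=1$ dispatched separately.
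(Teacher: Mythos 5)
Your proof takes essentially the same route as the paper's: use the linear relation $(\ref{equ:N-def})$ to pass between $k$ and $\lambda$ (thereby sidestepping the square root in $(\ref{lambda-def})$), then $(\ref{equ:v})$ for $\mu\Rightarrow k$ and $(\ref{e:34})$ for $k,\lambda\Rightarrow\mu$ to close the cycle. One small slip: your treatment of $N=1$ is wrong—there $(\ref{equ:N-def})$ only forces $k=1$ and leaves $\lambda$ undetermined, and from $(\ref{lambda-def})$ one gets $\lambda=\sqrt{(\gamma-1)/(\gamma+1)}\neq 0$; in fact the stated equivalence genuinely fails when $N=1$, and both your argument and the paper's implicitly operate in the range $N\geq 2$, which is the only one relevant to the set $\mathcal{A}$.
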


\begin{proof}
It immediately follows from (\ref{equ:N-def}) that inclusions
$k\in \Q$ and $\lambda\in \Q$ are equivalent. By (\ref{equ:v}),
$\mu\in \Q$ implies both $k\in\Q{}$ and $\lambda\in\Q{}$. On the
other hand, if $k\in \Q$ then $\lambda\in \Q$, so by (\ref{e:34})
 $\mu\in \Q$ and the lemma is proved.   \qed \end{proof}

For the further convenience, we put $\g\in \mathcal{A}$,  if there
exists $N\geq 2$, $N\in\N{}$, such that $u_N(x,y)$ is an algebraic
function. In this case we also define
$$
\mathcal{N}({\g})=\{N\in\N{}: \text{$u_N$ is an algebraic
function}\}.
$$

\begin{lem}
\label{lem:algebra} Let $\g\in \mathcal{A}$, $|\g|>1$ and $N\geq
2$. Then $N\in \mathcal{N}({\g})$ iff the corresponding exponent
$k(\gamma,N)\in \Q$.
\end{lem}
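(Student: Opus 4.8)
The plan is to reduce algebraicity of the basic $N$-solution $u_N$ — as given by the parametric (polynomial) representation (\ref{equ:reshPHI}) — to rationality of the growth exponent $k=k(\gamma,N)$, using Lemma~\ref{lem:raz} to pass freely between $k$, $\lambda$ and $\mu$. One direction is immediate: if $k\in\Q$, then by Lemma~\ref{lem:raz} also $\mu\in\Q$, and the exponent $\tfrac{k(2N-1)-N}{2}$ appearing in the last line of (\ref{equ:reshPHI}) is rational; hence $u_N$ is (a branch of) an algebraic function of $(x,y)$, because eliminating the parametrization variables $X,Y$ from the three polynomial-with-rational-exponent relations in (\ref{equ:reshPHI}) produces a polynomial identity $P(x,y,u_N)\equiv0$. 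Concretely, one raises the third equation to a suitable integer power to clear the fractional exponent, writes $x$, $y$, $u_N^{\,\text{power}}$ as polynomials in $X,Y$ with rational coefficients, and invokes the fact that the image of a polynomial map $\Com{2}\to\Com{3}$ is contained in an algebraic hypersurface (elimination theory / the projection of a variety is constructible, and here one can also argue directly via resultants in $X$ and $Y$).

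The substance is the converse: assuming $u_N$ is algebraic, I must show $k\in\Q$. Here I would exploit the \emph{Hölderian} structure built into (\ref{equ:reshPHI}). Restrict to the ray where $Y=0$, $X=\rho>0$: then $x=(\mu+1)\rho^{2N-1}$, $y=0$, and $u_N=C\rho^{\,k(2N-1)-N}$, so along the positive $x$-axis $u_N = C' x^{\,\alpha}$ with $\alpha=\dfrac{k(2N-1)-N}{2N-1}=k-\dfrac{N}{2N-1}$. An algebraic function of $x$ alone that is a pure power $x^\alpha$ must have rational exponent $\alpha$ (its Puiseux expansion at $0$ has rational exponents; equivalently, $u_N^{\,m}=c\,x^{\ell}$ for integers $m,\ell$ forces $\alpha=\ell/m\in\Q$). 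Since $\dfrac{N}{2N-1}\in\Q$, this yields $k\in\Q$, and Lemma~\ref{lem:raz} finishes. One should check that the branch of the algebraic function $u_N(x,y)$ really does restrict to this pure power on the axis — this follows because the parametric representation is valid for all real $\tau$ and $h$ (Theorem~\ref{lem:repres}), so the axis point $\tau=0$ lies on the graph — and that $x\to0^+$ is genuinely approached, which holds as $\rho\to0^+$ since $\mu+1\ne0$ (indeed $\mu>1$ for $\gamma>1$ and $\mu<-1$ for $\gamma<-1$, so $\mu+1$ is nonzero; when $\gamma<-1$ one instead has $x=(\mu+1)\rho^{2N-1}$ with $\mu+1<0$, so one approaches along the negative axis, which is handled identically).

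The main obstacle is the converse direction, and within it the only delicate point is justifying rigorously that algebraicity of the two-variable function $u_N(x,y)$ descends to algebraicity of its restriction to the coordinate axis, together with the elementary-but-must-be-stated fact that a branch of an algebraic function which equals $c\,x^\alpha$ near $x=0$ forces $\alpha\in\Q$. Both are standard once the parametrization (\ref{equ:reshPHI}) is in hand — the first because substituting $y=0$ into the polynomial identity $P(x,y,u)=0$ gives a nontrivial polynomial relation between $x$ and $u_N|_{y=0}$ (nontrivial because $u_N$ is nonconstant on the axis for $N\ge2$, as $k(2N-1)-N>0$), the second by Puiseux's theorem. So the proof is short: the hard work was already done in establishing (\ref{equ:reshPHI}) and Lemma~\ref{lem:raz}.
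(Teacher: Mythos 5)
Your proof is correct in substance and takes essentially the paper's approach: deduce algebraicity from the rational parametrization (\ref{equ:reshPHI}) when $k\in\Q{}$, and for the converse extract $k$ from the pure-power behavior of $u_N$ along a ray. The paper's converse reads off the growth exponent as $h\to\infty$ from (\ref{equ:resh4}) (equivalently, from the homogeneity $u_N=\rho^k f_N(\theta)$) and simply asserts that an algebraic function must have rational growth exponent; your Puiseux-at-the-origin argument, together with the check that the restriction to $y=0$ yields a nontrivial polynomial relation, makes that same inference more explicit and is arguably more careful than the paper's one-line assertion. One minor slip: on the ray $Y=0$, $X=\rho>0$, the factor $\re(X+iY)^N=\rho^N$ contributes, so $u_N=C\rho^{k(2N-1)}$ rather than $C\rho^{k(2N-1)-N}$, giving $\alpha=k$ directly (not $k-\tfrac{N}{2N-1}$); the conclusion $k\in\Q{}$ is of course unaffected.
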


\begin{proof}
Let $N\in \mathcal{N}({\g})$. Then it follows from
(\ref{equ:resh4}) and $|\mu|>2N-1\geq 3$ that
$$
(x^2+y^2)=h^{4N-2}[\mu^2+1+2\mu\cos (2N\tau+2\phi)]\sim h^{4N-2},
\qquad h\to\infty
$$
while
$$
u_n(x,y)=h^{(2N-1)k}\cos (2N\tau+\phi),
$$
where $k=k(\gamma,N)$. Thus, the growth exponent of $u$ is equal to
$k$ and it follows that $k\in\Q{}$. Now, let $k(\gamma,N)\in \Q$.
Then  (\ref{equ:reshPHI}) gives a rational parametrization of
$u^{2d}_N$, where $d$ is the denominator of $k$. Hence, $u_N(x,y)$
is an algebraic function. \qed \end{proof}

\begin{cor}
\label{cor:aron} For $\gamma=1$ all the $N$-solutions are
algebraic functions, i.e.  $\mathcal{N}(1)=\N{}$.
\end{cor}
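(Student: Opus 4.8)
The plan is to derive $\mathcal{N}(1)=\N{}$ directly from Lemma~\ref{lem:algebra} by showing that for $\gamma=1$ the exponent $k(\gamma,N)$ is rational for every $N\in\N{}$. Since the case $N=1$ gives $k=1\in\Q{}$ trivially (and the $0$- and $1$-solutions are the trivial solutions anyway), it suffices to treat $N\geq2$ and invoke the equivalence established in Lemma~\ref{lem:algebra}: $N\in\mathcal{N}(\gamma)$ iff $k(\gamma,N)\in\Q{}$.

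First I would substitute $\gamma=1$ into the characteristic equation (\ref{k-l}). The coefficient $(k_1-1)(k_2-1)=-\frac{(N-1)^2}{2N-1}\cdot\frac{\g-1}{\g+1}$ from (\ref{last}) vanishes at $\gamma=1$, so one of the roots is exactly $k_1=1$ and hence, by Vi\`ete's formula for the product of the roots, $k_2=\frac{N^2(1+\g)}{(2N-1)(\g+1)}=\frac{N^2}{2N-1}$ is the biggest root. This is manifestly a rational number for all $N$, so $k(1,N)\in\Q{}$. Alternatively, and perhaps more transparently, one can argue through the auxiliary parameter: at $\gamma=1$ equation (\ref{mu}) reads $\mu^2-2N\mu+(2N-1)=(\mu-1)(\mu-(2N-1))=0$, giving $\mu(1,N)=2N-1\in\Q{}$, whence $k\in\Q{}$ by (\ref{equ:v}) and Lemma~\ref{lem:raz}.

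Having shown $k(1,N)=N^2/(2N-1)\in\Q{}$ for every $N\geq2$, I would then apply the ``if'' direction of Lemma~\ref{lem:algebra}: rationality of $k$ forces $N\in\mathcal{N}(1)$, so $u_N$ is algebraic for all $N$, which is exactly the assertion $\mathcal{N}(1)=\N{}$. (One should note that Lemma~\ref{lem:algebra} is stated for $|\g|>1$, but the relevant ``if'' implication rests only on representation (\ref{equ:reshPHI}), which was extended to $\gamma=1$ in Section~\ref{subsec:22} under the standing convention $k>1$; this is the only point requiring a word of care.) For concreteness, one can remark that with $\mu=2N-1$ the parametrization (\ref{equ:reshPHI}) collapses to $x+iy=2N-1)^{?}$\,--\,more precisely, since $\gamma=1$ gives $u_N=\re(x+iy)^N$ by the discussion preceding Theorem~\ref{theo:conjugate}, these are just the harmonic polynomials, which are patently algebraic.

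I do not anticipate a genuine obstacle here: the corollary is essentially a one-line specialization. The only subtlety worth flagging in the write-up is the degenerate nature of the separation equation at $\gamma=1$ (noted at the start of Section~\ref{subsec:22}), which is why the hypothesis $k>1$ must be carried along and why the explicit formula $u_N=\re(x+iy)^N$ is the cleanest certificate of algebraicity. Everything else follows mechanically from Lemma~\ref{lem:algebra} together with the computation $k(1,N)=N^2/(2N-1)$.
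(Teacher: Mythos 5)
Your main argument is exactly the intended one and is correct: substituting $\gamma=1$ into (\ref{k-l}) gives $(2N-1)k^2-(N^2+2N-1)k+N^2=0$, which factors (since $k=1$ is a root) to give the larger root $k(1,N)=N^2/(2N-1)\in\Q{}$; equivalently, $\mu(1,N)=2N-1$ from (\ref{mu}). Lemma~\ref{lem:algebra} then yields $N\in\mathcal{N}(1)$ for every $N\geq2$, and $N=1$ is trivial. The paper states the corollary without proof precisely because this is the immediate specialization; your care in noting that Lemma~\ref{lem:algebra} is nominally stated for $|\gamma|>1$ but that its ``if'' half rests only on representation (\ref{equ:reshPHI}), which Section~\ref{subsec:22} explicitly extends to $\gamma=1$ under the standing hypothesis $k>1$, is well placed.

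However, your secondary ``for concreteness'' remark is wrong, and you should delete it rather than try to repair the garbled formula. The sentence in Section~\ref{subsec:conj} claiming that $\gamma=1$ gives $u_N=\re(x+iy)^N$ is an error in the paper: it should read $\gamma=\infty$ (the Laplace limit mentioned in the Introduction). At $\gamma=1$ the solutions are emphatically \emph{not} harmonic polynomials; the paper's own worked example $u_2=x^{4/3}-y^{4/3}$ (growth exponent $k=4/3=N^2/(2N-1)$ for $N=2$, not $k=2$) shows this, as does the polynomial identity $27x^4y^4u^3=(x^4-y^4-u^3)^3$, which would be absurd for $u=x^2-y^2$. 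Indeed, in (\ref{equ:resh2}) with $\mu=2N-1$ one has $u_N\sim h^{(2N-1)k}=h^{N^2}$ while $(x+iy)^N\sim h^{N(2N-1)}$, and these exponents disagree for $N\geq2$. So the harmonic-polynomial ``certificate of algebraicity'' fails; the correct certificate is the rational parametrization (\ref{equ:reshPHI}) with $\mu=2N-1\in\Q{}$ and $k\in\Q{}$, which is exactly what your primary argument already invokes.
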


The  next assertion  is an easy corollary of (\ref{equ:k-+})

\begin{lem}\label{lem:equiv_gamma}
Let $\g\in\Q{}$. Then $k_N\in\Q{} \Leftrightarrow k^*_N\in\Q{}$.
In particular,
$$
\mathcal{N}(\g)=\mathcal{N}(-\g).
$$
\end{lem}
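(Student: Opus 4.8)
The plan is to prove $\mathcal{N}(\gamma) = \mathcal{N}(-\gamma)$ by showing that for a fixed $N \geq 2$, the exponent $k_N = k(\gamma, N)$ is rational if and only if $k^*_N = k(-\gamma, N) = k(\gamma^*, N)$ is rational; the equality of the index sets then follows immediately from the definition of $\mathcal{N}$ together with Lemma~\ref{lem:algebra}. The whole argument rests on identity (\ref{equ:k-+}), namely $k(1+\gamma) + k^*(1-\gamma) = 2$, which was established in the proof of Theorem~\ref{theo:conjugate}.

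First I would fix $\gamma \in \Q$ with $|\gamma| > 1$ and $N \geq 2$, and solve (\ref{equ:k-+}) for $k^*$ in terms of $k$:
\begin{equation*}
k^*_N = \frac{2 - k_N(1+\gamma)}{1-\gamma}.
\end{equation*}
Since $\gamma$ is rational and $\gamma \neq 1$ (as $|\gamma| > 1$), the map $k \mapsto (2 - k(1+\gamma))/(1-\gamma)$ is an affine function with rational coefficients, hence it maps rationals to rationals. Therefore $k_N \in \Q$ implies $k^*_N \in \Q$. By (\ref{equ:xi-xi}) and the symmetry of the construction we have $(\gamma^*)^* = \gamma$ and correspondingly $(k^*_N)^* = k_N$, so the same affine relation read in the other direction gives that $k^*_N \in \Q$ implies $k_N \in \Q$. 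This proves $k_N \in \Q \Leftrightarrow k^*_N \in \Q$.

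Next I would translate this into the statement about index sets. By Lemma~\ref{lem:algebra}, for $\gamma \in \mathcal{A}$ (equivalently $-\gamma \in \mathcal{A}$, which I would note follows once the index sets are shown equal and nonempty) one has $N \in \mathcal{N}(\gamma)$ iff $k(\gamma,N) \in \Q$, and likewise $N \in \mathcal{N}(-\gamma)$ iff $k(-\gamma, N) = k^*_N \in \Q$. Combining with the equivalence $k_N \in \Q \Leftrightarrow k^*_N \in \Q$ yields $N \in \mathcal{N}(\gamma) \Leftrightarrow N \in \mathcal{N}(-\gamma)$ for every $N \geq 2$, hence $\mathcal{N}(\gamma) = \mathcal{N}(-\gamma)$; in particular $\gamma \in \mathcal{A}$ iff $-\gamma \in \mathcal{A}$.

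There is essentially no serious obstacle here — the proof is a direct exploitation of the linear relation (\ref{equ:k-+}). The only point requiring a modicum of care is the bookkeeping with the involution $\gamma \mapsto \gamma^* = -\gamma$: one must check that $\mu^* = -\mu$ from (\ref{equ:xi-xi}) indeed makes $(k^*_N)^* = k_N$, so that the implication can be reversed without re-deriving (\ref{equ:k-+}) from scratch. Alternatively, and perhaps more cleanly, I would simply observe that (\ref{equ:k-+}) is manifestly symmetric under simultaneously swapping $(\gamma, k_N) \leftrightarrow (-\gamma, k^*_N)$, which makes the biconditional transparent and dispenses with any further computation.
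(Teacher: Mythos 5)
Your proof is correct and follows exactly the route the paper intends: the paper simply declares the lemma ``an easy corollary of (\ref{equ:k-+})'' and leaves it at that, while you fill in the details by solving that linear relation for $k^*_N$ (and, for the reverse direction, for $k_N$), observing that since $\gamma\in\Q$ and $|\gamma|>1$ the map is affine with rational coefficients, and then invoking Lemma~\ref{lem:algebra} to pass from rationality of the exponents to equality of the index sets. The detour through the involution $(\gamma^*)^*=\gamma$ and $(k^*_N)^*=k_N$ is unnecessary — one can solve (\ref{equ:k-+}) directly for $k$ as $k=\bigl(2-k^*(1-\gamma)\bigr)/(1+\gamma)$, which is again affine over $\Q$ because $1+\gamma\ne0$ — but you correctly note this cleaner alternative yourself, so the argument is sound either way.
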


By Lemma~\ref{lem:equiv_gamma}, we can assume without loss of
generality that $\g>1$. In what follows, we suppose that $p$ and
$q$ have no common divisors. Then by Lemma~\ref{lem:algebra},
$\g\in \mathcal{A}$ if and only if there is an integer $N\geq 2$
such that $k(\gamma,N)\in\Q{}$. By virtue of Lemma~\ref{lem:raz},
this is equivalent to existence of a rational solution
$\mu=\frac{A}{B}>1$ of (\ref{equ:N-v}):
$$
N=\frac{q(A^2-B^2)}{2B(Ap-Bq)}, \qquad A>B.
$$
Thus, we arrive at the following diophantine equation
\begin{equation}\label{equ:diaf}
A^2q-2ABpN+q(2N-1)B^2=0.
\end{equation}

\begin{thm}\label{theo:Q-P}
The following assertions are equivalent

(i) $\g=p/q\in \mathcal{A}$ with $N\in \mathcal{N}(\g)$, $N\geq
2$;

(ii)  equation (\ref{equ:diaf}) has  an integer solution $(A,B)$:
$A,B\in\Z{}$, and $A>B\geq1$;

(iii) the discriminant
\begin{equation}\label{equ:discr}
D=N^2p^2-q^2(2N-1)
\end{equation}
is an squared integer.

Moreover, if $\g\in \mathcal{A}$, $\gamma\ne1$, then the set
$\mathcal{N}(\g)$ is finite and the following upper bounds holds
\begin{equation}\label{equ:finite}
    N<\frac{q^2(p^2+2-q^2)}{2p^2}.
\end{equation}
In particular,
\begin{equation}\label{e-i}
 q\geq 3.
 \end{equation}
and
\begin{equation}\label{e-ii}
q+1\leq p \leq q^2-1;
\end{equation}
\end{thm}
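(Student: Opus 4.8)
The plan is to establish the chain (i) $\Leftrightarrow$ (ii) $\Leftrightarrow$ (iii) first, then to extract the finiteness of $\mathcal N(\gamma)$ and the explicit bounds from (iii). For the equivalence (i) $\Leftrightarrow$ (ii): by Lemmas~\ref{lem:algebra} and~\ref{lem:raz}, $N\in\mathcal N(\gamma)$ is equivalent to $\mu(\gamma,N)\in\Q$, and since $\mu>1$ for $\gamma>1$ we may write $\mu=A/B$ in lowest terms with $A>B\geq1$. Substituting this into the defining relation (\ref{equ:N-v}), i.e. clearing denominators in $N=\frac{\mu^2-1}{2(\mu\gamma-1)}$ with $\gamma=p/q$, produces exactly the Diophantine equation (\ref{equ:diaf}); conversely any integer solution $(A,B)$ with $A>B\geq1$ yields a rational $\mu>1$, hence $k(\gamma,N)\in\Q$ by Lemma~\ref{lem:raz}, hence $N\in\mathcal N(\gamma)$. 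For (ii) $\Leftrightarrow$ (iii): treat (\ref{equ:diaf}) as a quadratic in $A/B$ (or in $A$ with $B$ a parameter); its discriminant, up to the square factor $B^2$, is $N^2p^2-q^2(2N-1)=D$. A rational root exists iff $D$ is a perfect square; one must check that a rational root $A/B>1$ can always be scaled to an integer pair with $A>B\geq1$, which follows since $A/B>1$ forces the numerator to exceed the denominator after reducing to lowest terms, and the quadratic then forces integrality of the reduced representatives through $q\mid A^2$ considerations together with $\gcd(p,q)=1$.

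For the finiteness and the upper bound (\ref{equ:finite}): by (iii), write $D=N^2p^2-q^2(2N-1)=t^2$ for some nonnegative integer $t$. Then $t^2\equiv -q^2\pmod{p^2}$ is not the relevant estimate; rather, I would bound $t$ between consecutive values. Observe that $D<N^2p^2$, so $t\leq Np-1$ unless $D=N^2p^2$ (impossible since $q^2(2N-1)>0$). Thus $t\leq Np-1$, giving
\[
N^2p^2-q^2(2N-1)=t^2\leq (Np-1)^2=N^2p^2-2Np+1,
\]
which rearranges to $2Np\leq q^2(2N-1)+1\leq 2Nq^2$, hence $p\leq q^2$; sharpening with the $+1$ and the strict inequality $q^2(2N-1)+1<2Nq^2$ yields the bound. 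To get (\ref{equ:finite}) in the stated form one solves the inequality $2Np-1\leq q^2(2N-1)$ for $N$: this gives $N(2p-2q^2)\leq 1-q^2<0$ when $p<q^2$, hence $N\leq\frac{q^2-1}{2q^2-2p}$, and a slightly more careful comparison of $t$ with $Np-1$ versus $Np-2$ (using parity of $D$ and of $Np-1$) refines this to the asserted $N<\frac{q^2(p^2+2-q^2)}{2p^2}$. The point is that $p<q^2$ is forced (otherwise $D<0$ for large $N$, or the inequality has no positive solution $N$), which makes the right side finite, so $\mathcal N(\gamma)$ is finite.

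Finally, the bounds (\ref{e-i}) and (\ref{e-ii}) follow by elementary estimates on $D\geq0$. Nonnegativity of $D$ at $N=2$ (the smallest admissible index, which must lie in $\mathcal N(\gamma)$ whenever $\gamma\in\mathcal A$ — here one uses that the bound (\ref{equ:finite}) is vacuous unless it permits $N=2$, or more simply that $\mathcal N(\gamma)\neq\varnothing$ and the analysis of (\ref{equ:diaf}) shows the least element is $2$) gives $4p^2\geq 3q^2$; combined with $p<q^2$ this already forces $q\geq 3$ (for $q=1,2$ one checks directly that no $N\geq2$ satisfies both $p<q^2$ and $D$ a square), and $p\leq q^2-1$ is just the integer form of $p<q^2$. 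The lower bound $p\geq q+1$ is immediate from $\gamma=p/q>1$ and $\gcd(p,q)=1$ forcing $p>q$. I expect the main obstacle to be the bookkeeping in the equivalence (ii) $\Leftrightarrow$ (iii) — specifically, verifying that a rational root $\mu=A/B>1$ of the quadratic (\ref{mu})/(\ref{equ:N-v}) can always be normalized to the integer pair with $A>B\geq1$ required in (ii), which hinges on the coprimality of $p,q$ and a short divisibility argument on the constant term $q(2N-1)B^2$ of (\ref{equ:diaf}) — and the parity refinement needed to upgrade the crude bound $p\leq q^2$ to the sharp $N<\frac{q^2(p^2+2-q^2)}{2p^2}$.
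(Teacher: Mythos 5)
Your handling of the equivalences (i)$\Leftrightarrow$(ii)$\Leftrightarrow$(iii) matches the paper's. One remark: your worry about ``forcing integrality of the reduced representatives through $q\mid A^2$ considerations'' is unnecessary. If $v_2$ is a rational root of $V^2q-2pNV+q(2N-1)=0$ and you write $v_2=A/B$ in lowest terms with $B>0$, then clearing denominators already gives the integer identity (\ref{equ:diaf}); no further divisibility argument is needed. The paper simply observes $F(2N-1)=-2(N-1)(2N-1)(p-q)<0$ to conclude $v_2>2N-1$, hence $A>(2N-1)B\geq B$.

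The genuine gap is in your derivation of the upper bound (\ref{equ:finite}), and it is not just bookkeeping. From $d\leq Np-1$ you correctly get $2Np-1\leq q^2(2N-1)$, i.e.\ $2N(p-q^2)\leq 1-q^2$. But when $p<q^2$ you must divide by the \emph{negative} quantity $2(p-q^2)$, which flips the inequality and gives
$$
N\;\geq\;\frac{q^2-1}{2(q^2-p)},
$$
a \emph{lower} bound on $N$, not the claimed $N\leq(q^2-1)/(2q^2-2p)$. (This lower bound is exactly what the paper extracts from $d\leq Np-1$ — it is what drives the proof of $p\leq q^2-1$ — but it cannot yield finiteness.) Your proposed ``parity refinement of $t$ versus $Np-1$ and $Np-2$'' does not rescue this: parity considerations are used in the paper later for the maximal/minimal series, not here. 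The actual mechanism is different. Complete the square around the non-integer rational $Np-q^2/p$:
$$
D=\left(Np-\frac{q^2}{p}\right)^2+\frac{q^2(p^2-q^2)}{p^2},
$$
so $d>Np-q^2/p$, and the Bernoulli inequality gives the upper cushion
$$
d<\left(Np-\frac{q^2}{p}\right)+\frac{q^2(p^2-q^2)}{2p(Np^2-q^2)}.
$$
Because $\gcd(p,q)=1$, $q^2/p$ has non-zero remainder $m/p$ with $1\leq m\leq p-1$, so $Np-q^2/p$ is not an integer and the integer $d$ must exceed it by at least $m/p\geq 1/p$. Comparing this forced gap with the Bernoulli cushion yields $1/p<q^2(p^2-q^2)/(2p(Np^2-q^2))$, which rearranges to (\ref{equ:finite}). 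Your route never produces an upper bound on $N$ at all, so the finiteness of $\mathcal N(\gamma)$ is not established by your argument.

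One further minor point: your claim that $D\geq0$ at $N=2$ (giving $4p^2\geq3q^2$) ``combined with $p<q^2$ already forces $q\geq3$'' is false — $q=2$, $p=3$ satisfies both $4\cdot9\geq3\cdot4$ and $3<4$. You do hedge by saying one checks $q=1,2$ directly, and that check works (for $q=2$, $p=3$ one finds $(3N-2)^2<D<(3N-1)^2$ for all $N\geq2$), but the paper's route is cleaner: once (\ref{equ:finite}) is in hand, $q=1$ and $q=2$ both force $N<2$, a contradiction.
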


\begin{proof}
Clearly, we have only to establish the equivalence (ii) and (iii).
In turn, the only nontrivial implication is (iii) $\Rightarrow$
(ii).

Let (iii) be true. Then $p=q\g>q$  and $D=d^2$ for some
$d\in\N{}$.Since
$$
D=N^2(p^2-q^2)+q^2(N-1)^2>0,
$$
we have $d>0$. Let $V=A/B$ and consider the following associated
with (\ref{equ:diaf}) quadratic equation
\begin{equation}\label{equ:diaf-V}
F(V):= V^2q-2pNV+q(2N-1)=0.
\end{equation}
Then (\ref{equ:diaf-V}) has two distinct \textit{rational}
solutions $v_1$ and $v_2$, $v_1<v_2$.  Since
$$
F(2N-1)=-2(N-1)(2N-1)(p-q)<0,
$$
we have $v_2>2N-1$. Let $v_2=A/B$ where $A$ and $B$ have no common
divisors, and $B>0$. Hence, $A>(2N-1)B>B$ and $(A,B)$ is a desired
solution of (ii).
%

In order to establish the finiteness of $\mathcal{N}(\g)$, we make
use (iii). We have
\begin{equation}\label{equ:finite1}
 D=\left(Np-\frac{q^2}{p}\right)^2+\frac{q^2(p^2-q^2)}{p^2}.
\end{equation}
Let $D=d^2$, with $d\in\N{}$. Then (\ref{equ:finite1}) yields
\begin{equation}\label{equ:lower}
d>Np-\frac{q^2}{p},
\end{equation}
while the Bernoulli inequality and (\ref{equ:finite1}) imply the
upper following bound
\begin{equation}\label{equ:upper}
d<\left(Np-\frac{q^2}{p}\right)+\frac{q^2(p^2-q^2)}{2p(Np^2-q^2)}.
\end{equation}

Since $p$ and $q$ have no common divisors, we can write
$$
\frac{q^2}{p}=M+\frac{m}{p}
$$
where $M>0$ is an integer, and $m\in\{1,2,\ldots,p-1\}$ is the
non-zero remainder of the latter ratio. On the other hand, since
$m=q^2-Mp$, it follows that $m$ and $p$ have no common divisors.
Thus, by virtue of $Np\in\N{}$, and strict inequalities
(\ref{equ:lower}) and (\ref{equ:upper}) we obtain
$$
\frac{q^2(p^2-q^2)}{2p(Np^2-q^2)}>\frac{1}{p},
$$
which easily implies (\ref{equ:finite}).

To verify (\ref{e-i}) we notice that the cases $q=1$ and $q=2$
together with (\ref{equ:finite}) easily yield the contradiction:
$N<2$.

The left inequality in (\ref{e-ii}) immediately follows from
$p>q$. Finally, to prove the right hand side inequality we put as
above $D=d^2$, $d\geq 1$. Then  $D<N^2p^2$, whence $d<Np$. Taking
into account integrity of $d$ we obtain $d\leq Np-1$, or what is
the same
\begin{equation*}
\label{equ:N=q^2-2}
0\leq (Np-1)^2-D=2N(q^2-p)+1-q^2,
\end{equation*}
hence
\begin{equation}\label{equ:New}
q^2-p\geq (q^2-1)/2N\geq 0.
\end{equation}
Therefore, $p<q^2$ and by virtue of integrity of $q$ and $p$ we
arrive at a stronger inequality
$$
q+1\leq p\leq q^2-1
$$
and the theorem is proved.
  \qed \end{proof}

\begin{cor}\label{corol:int}
If \ $\g$, $|\g|>1$, is an integer number  then (\ref{main0}) has
no (non-trivial) algebraic $N$-solutions.
\end{cor}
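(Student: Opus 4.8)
The plan is to derive Corollary~\ref{corol:int} directly from Theorem~\ref{theo:Q-P}. The statement to prove is that for integer $\gamma$ with $|\gamma|\geq 2$, equation (\ref{main0}) admits no nontrivial algebraic $N$-solutions; by Lemma~\ref{lem:equiv_gamma} we may assume $\gamma>1$, so $\gamma=\gamma/1$ with $p=\gamma$ and $q=1$ in the notation of the theorem. Now I invoke part (iii) of Theorem~\ref{theo:Q-P}: $\gamma\in\mathcal{A}$ with $N\in\mathcal{N}(\gamma)$, $N\geq 2$, would force the discriminant $D=N^2p^2-q^2(2N-1)$ to be a perfect square. But observe the conclusion (\ref{e-i}) of the very same theorem asserts $q\geq 3$ whenever $\gamma\in\mathcal{A}$, $\gamma\neq 1$. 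Since here $q=1<3$, this is a contradiction, and hence $\gamma\notin\mathcal{A}$, which is exactly the assertion of the corollary.

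Alternatively, and perhaps more transparently for the reader, I would run the argument through the finiteness bound (\ref{equ:finite}) rather than (\ref{e-i}) directly (indeed this is how (\ref{e-i}) itself is obtained in the proof of Theorem~\ref{theo:Q-P}). Assume for contradiction that $\gamma$ is an integer $\geq 2$ and that there exists $N\geq 2$ with $u_N$ algebraic. Then $\gamma=p/q$ with $q=1$, $p=\gamma\geq 2$, and the bound (\ref{equ:finite}) reads
\begin{equation*}
N<\frac{q^2(p^2+2-q^2)}{2p^2}=\frac{p^2+1}{2p^2}=\frac12+\frac{1}{2p^2}\leq \frac12+\frac18<1.
\end{equation*}
This contradicts $N\geq 2$ (indeed it contradicts $N\geq 1$), so no such $N$ exists and $\gamma\notin\mathcal{A}$.

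There is essentially no obstacle here: the corollary is a one-line specialization of Theorem~\ref{theo:Q-P} to the case of unit denominator, and the only thing to be careful about is the reduction to $\gamma>1$ via Lemma~\ref{lem:equiv_gamma} so that the hypothesis $|\gamma|\geq 2$ is covered on both sides. I would present the short computation above as the proof, since it makes the mechanism visible (the quantity $q^2(p^2+2-q^2)/2p^2$ simply cannot reach $2$ when $q=1$), and remark in passing that it also re-proves, and is consistent with, the bound (\ref{e-i}) that $q\geq 3$ for any $\gamma\in\mathcal{A}\setminus\{1\}$.
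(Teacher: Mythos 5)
Your proof is correct and follows essentially the same route the paper intends: Corollary~\ref{corol:int} is an immediate specialization of Theorem~\ref{theo:Q-P}, and you correctly reduce to $\gamma>1$ via Lemma~\ref{lem:equiv_gamma}, then observe that $q=1$ violates (\ref{e-i}) (or, equivalently, that (\ref{equ:finite}) forces $N<1$). Indeed your alternative computation is exactly the $q=1$ case of how the paper itself establishes (\ref{e-i}), so the two arguments are not merely compatible but essentially identical.
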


\begin{prop}\label{corol:full}
The following representation holds
\begin{equation}\label{equ:gamma-param}
\mathcal{A}=\left\{\frac{2N-1+\t ^2}{2\t N} \;: \quad \t
\in\Q{}\cap(0,1),\; N\in\N{}\right\}.
\end{equation}
\end{prop}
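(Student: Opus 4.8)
The plan is to exhibit a bijective-type parametrization of $\mathcal{A}$ by introducing a slack parameter that linearizes the discriminant condition from Theorem~\ref{theo:Q-P}. Recall that by Lemma~\ref{lem:raz} and Lemma~\ref{lem:algebra}, $\g=p/q\in\mathcal{A}$ with $N\in\mathcal{N}(\g)$ iff $\mu(\g,N)$ is rational, and from (\ref{mu}) this $\mu$ satisfies $\mu^2-2\g N\mu+(2N-1)=0$. So the first step is: $\g\in\mathcal{A}$ with index $N$ precisely when the quadratic $\mu^2-2\g N\mu+(2N-1)=0$ has a rational root $\mu$. Solving for $\g$ in terms of $\mu$ and $N$ gives immediately
$$
\g=\frac{\mu^2+2N-1}{2\mu N}.
$$
So every element of $\mathcal{A}$ arises this way for some rational $\mu$ and some integer $N\ge 2$; conversely, for any rational $\mu$ and integer $N$, this formula produces a rational $\g$ for which the quadratic (\ref{mu}) has the rational root $\mu$, hence (once $|\g|>1$ and $N\ge 2$ are checked) gives $\g\in\mathcal{A}$.

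The second step is to reconcile the range of the parameter: the claimed representation uses $\t\in\Q{}\cap(0,1)$, not an arbitrary rational $\mu$. Here I would invoke the root structure already worked out: for $\g>1$ the relevant root is $\mu^+=N\g+\sqrt{N^2\g^2-2N+1}>2N-1\ge 3$, while the companion root is $\mu^-=(2N-1)/\mu^+\in(0,1)$ by Vi\`ete's formula $\mu^+\mu^-=2N-1$ applied together with $\mu^+>2N-1$. Thus although the ``geometric'' root exceeds $1$, its Vi\`ete-conjugate $\t:=\mu^-=(2N-1)/\mu^+$ lies in $(0,1)$ and is rational exactly when $\mu^+$ is. Substituting $\mu^+=(2N-1)/\t$ into $\g=\frac{(\mu^+)^2+2N-1}{2\mu^+N}$ and simplifying (using $(\mu^+)^2+2N-1 = (2N-1)\bigl((2N-1)/\t^2+1\bigr)$ and $\mu^+ N = (2N-1)N/\t$) collapses the $(2N-1)$ factors and yields exactly
$$
\g=\frac{2N-1+\t^2}{2\t N},
$$
which is the expression in (\ref{equ:gamma-param}). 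One checks $\g>1$ automatically: $2N-1+\t^2-2\t N=(\t-1)^2+2(N-1)(1-\t)>0$ for $\t\in(0,1)$, $N\ge 2$. For $\g<-1$ one uses Lemma~\ref{lem:equiv_gamma} (i.e. $\mathcal{N}(\g)=\mathcal{N}(-\g)$), so $\mathcal{A}=-\mathcal{A}$ and it suffices to describe $\mathcal{A}\cap(1,\infty)$, which the displayed set does after noting the set in (\ref{equ:gamma-param}) already consists of numbers $>1$; the full $\mathcal{A}$ is then its closure under sign change — but since the problem only asserts the positive description implicitly, I would state the representation for $\g>1$ and remark the negative case follows by symmetry.

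The main obstacle I anticipate is the bookkeeping in the ``conversely'' direction: given $\t\in\Q{}\cap(0,1)$ and $N\in\N{}$, one must verify that the resulting $\g=\frac{2N-1+\t^2}{2\t N}$ genuinely lies in $\mathcal{A}$, i.e. that the associated $\mu^+=(2N-1)/\t$ is a legitimate value of $\mu(\g,N)$ in the sense of (\ref{equ:mu-g-n}) — that it is the root $>1$ and that it is rational (clear, since $\t\in\Q{}$). One also should confirm $N\ge 2$ is forced or else handle $N=1$: for $N=1$, $\g=\frac{1+\t^2}{2\t}=\frac12(\t+\t^{-1})>1$ but this gives only the trivial $1$-solution, so strictly one wants $N\ge 2$ in the indexing set; since the problem writes $N\in\N{}$ I would point out that $N=1$ contributes no genuinely new constraint because $\mathcal{A}$ is defined via existence of \emph{some} $N\ge 2$, and any $\g$ produced with $N=1$ is anyway of the form $\frac{2M-1+\sigma^2}{2\sigma M}$ for suitable $M\ge 2,\sigma\in(0,1)$ iff it is genuinely in $\mathcal{A}$ — so the set equality in (\ref{equ:gamma-param}) holds as written once one intersects with the condition that the corresponding $k>1$ is rational, which Lemma~\ref{lem:raz} guarantees is equivalent to $\mu\in\Q$, i.e. to $\t\in\Q$. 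Apart from this indexing subtlety the proof is a direct substitution, so I expect the writeup to be short.
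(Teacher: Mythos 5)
Your argument is essentially the paper's, just entered from a different door. The paper invokes Theorem~\ref{theo:Q-P}(iii), reformulates the square-discriminant condition $D=d^2$ as the Pell-type relation $N^2x^2-(2N-1)y^2=1$ with $x=p/d$, $y=q/d$, and then rationalizes by setting $x=(1+\t y)/N$; you instead go through Lemmas~\ref{lem:raz} and~\ref{lem:algebra}, work directly with the rationality of $\mu$ and the quadratic $\mu^2-2\g N\mu+(2N-1)=0$, and rationalize by taking the small root. The two parameters coincide: the paper's $\t=(Np-d)/q$ is exactly your $\mu^-$, and the involution $\t\mapsto(2N-1)/\t$ the paper uses to reduce to $(0,1)$ is exactly your Vi\`ete relation $\mu^+\mu^-=2N-1$. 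So your route is a slight streamlining that skips the intermediate Pell formulation; nothing is gained or lost mathematically, and both proofs rest on the same rationalization idea.

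One point in your favor: you correctly flagged that the right-hand side of (\ref{equ:gamma-param}) as literally written, with $N$ ranging over all of $\N{}$, contains $N=1$ values $\g=\frac{1}{2}(\t+\t^{-1})$ that need not belong to $\mathcal{A}$ (e.g.\ $\t=\tfrac12$ gives $\g=\tfrac54$, for which (\ref{equ:finite}) forces $N\le 3$ and neither $N=2$ nor $N=3$ yields a square discriminant). The paper's proof silently parametrizes only the $D=d^2$ condition and does not address this; the indexing set should be read as $N\ge 2$. Your observation is correct and the ``iff'' hand-waving in your last paragraph about $N=1$ being absorbable is in fact the only imprecise spot in an otherwise sound write-up --- it is cleanest simply to state the representation with $N\ge 2$.
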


\begin{proof}
We notice that the discriminant $D$ in (\ref{equ:discr}) will be
the a squared integer if and only if a positive integer valued
solution $(x,y)$ to the following Pell type equation
\begin{equation}\label{equ:x-y}
N^2x^2-(2N-1)y^2=1,
\end{equation}
does exist such that $x>y$. Here $x=p/d$, $y=q/d$ and $\g$ is
uniquely defined by $\g=x/y$.

Then (\ref{equ:x-y}) can be resolved by the standard
rationalization technique. Really, let $x=(1+\t y)/N$, whence $\t
$ should be a rational number. Substituting the last expression in
(\ref{equ:x-y}) gives
$$
y=\frac{2\t }{2N-1-\t ^2}, \qquad x=\frac{2N-1+\t ^2}{N(2N-1-\t
^2)}.
$$
Thus, we have
$$
\g=\frac{2N-1+\t ^2}{2\t N}.
$$

To define the admissible values of  $\t >0$ which provide  $\g>1$
we notice that
$$
\g-1=\frac{(\t -2N+1)(\t -1)}{2\t N},
$$
whence $\t \in(0;1)\cup(2N-1;+\infty)$. We can reduce our
representation to the interval $(0;1)$ only since  the invariance
property of (\ref{equ:gamma-param}) with respect to involution $\t
\to (2N-1)/\t $.
 \qed \end{proof}

\section{Maximal and minimal series}\label{sub:max} Now we study
the set $\mathcal{A}$ in dependence of the fractional
de\-com\-po\-si\-tion $\g=p/q$. We suppose as before that  $p$ and
$q$ have no common divisors.

\begin{prop}[Maximal series]\label{cor:p^2<q}
For $p/q\in\mathcal{A}$ the following sharp estimates hold
\begin{equation}\label{equ:max1}
p\leq q^2-2
\end{equation}
with equality only for odd denominator $q=2s+1$, $p=4(s^2+s)-1$,
and $N=s(s+1)$, where $s$ is a positive integer. Moreover, if the
denominator $q=2s$ is even we have a stronger inequality
\begin{equation}\label{equ:max2}
p\leq \frac{q^2-2}{2}=2s^2-1
\end{equation}
with equality iff
\begin{equation*}\label{equ:mmmm}
p=\frac{q^2-2}{2},\qquad N=\frac{q^2-4}{4}.
\end{equation*}
\end{prop}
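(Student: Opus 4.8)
The plan is to push the argument of Theorem~\ref{theo:Q-P} one step further, exploiting the fact that the inequality $d \le Np-1$ used there can be strengthened when $q^2 - p$ is small. Recall from the proof of Theorem~\ref{theo:Q-P} that if $\g = p/q \in \mathcal{A}$ with $N \in \mathcal{N}(\g)$ then $D = N^2 p^2 - q^2(2N-1)$ is a perfect square $d^2$ with $d < Np$, and moreover $d^2 = (Np)^2 - (2N-1)q^2 \equiv -q^2 \pmod{2N}$ while also $(Np-d)(Np+d) = (2N-1)q^2$. The key observation is that $\gcd(p,q)=1$ forces strong divisibility constraints on the two factors $Np-d$ and $Np+d$, whose product is $(2N-1)q^2$ and whose sum is $2Np$. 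First I would record that from $d \le Np-1$ one gets $2N(q^2-p)+1 \ge q^2$, i.e. $q^2 - p \ge (q^2-1)/(2N)$, which in particular gives $p \le q^2 - 1$ (the bound already in Theorem~\ref{theo:Q-P}); to improve this to $p \le q^2 - 2$ one must rule out $p = q^2 - 1$.

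So the first main step is to show $p \ne q^2-1$. If $p = q^2-1$ then $\gcd(p,q)=1$ is automatic, and the inequality $q^2 - p \ge (q^2-1)/(2N)$ becomes $1 \ge (q^2-1)/(2N)$, i.e. $2N \ge q^2-1$. On the other hand the finiteness bound (\ref{equ:finite}) reads $N < q^2(p^2+2-q^2)/(2p^2)$; substituting $p = q^2-1$ and simplifying the numerator $p^2 + 2 - q^2 = (q^2-1)^2 + 2 - q^2 = q^4 - 3q^2 + 3$ gives $N < q^2(q^4-3q^2+3)/(2(q^2-1)^2)$, which for $q \ge 3$ is only slightly below $q^2/2$. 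Combined with $2N \ge q^2-1$ this pins $N$ into a very narrow window; I would then check the equality case $D = (Np-1)^2$ directly, which forces $2N(q^2-p) + 1 - q^2 = 0$, i.e. $2N = q^2-1$ exactly (so $q$ is odd, $q = 2s+1$, $N = (q^2-1)/2 = 2s^2+2s$), and substitute back: $D = N^2 p^2 - q^2(2N-1)$ with $p = q^2-1$, $2N-1 = q^2-2$; one verifies $D = (Np-1)^2$ is then forced, but one must also check $D$ is a \emph{square}, and here I expect a parity/size contradiction with $D < N^2p^2$ being too tight, eliminating $p = q^2-1$. This case analysis — squeezing $N$ between (\ref{equ:finite}) and the lower bound and extracting a contradiction — is the part I expect to be the main obstacle, since it is delicate and $q$-dependent.

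The second step handles the equality case in (\ref{equ:max1}). Supposing $p = q^2 - 2$, the bound $q^2 - p \ge (q^2-1)/(2N)$ gives $2 \ge (q^2-1)/(2N)$, i.e. $4N \ge q^2-1$. Writing $D = (Np-d)(Np+d) = (2N-1)q^2$ with $Np - d$ and $Np+d$ of the same parity and summing to $2Np$, and using $\gcd(p,q)=1$, I would argue that $q^2 \mid Np \pm d$ in a suitable combination forces $d = Np - q$ or a nearby value; plugging $d = Np - q$ into $d^2 = N^2p^2 - (2N-1)q^2$ yields $-2Npq + q^2 = -(2N-1)q^2$, i.e. $2Np = 2Nq^2$, impossible for $p < q^2$ — so instead one is led to the correct relation. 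Running the Diophantine bookkeeping carefully produces $q = 2s+1$ odd, $p = 4s^2+4s-1 = q^2-2$, $N = s^2+s = s(s+1)$, and one checks $D = (2N-1)q^2 = (2s^2+2s-1)(2s+1)^2$ is indeed a perfect square — here $2N-1 = 2s^2+2s-1$ is not obviously square, so the verification must instead show $D = (Np - q)^2$ or similar reduces correctly; I would verify the claimed triple satisfies (\ref{equ:x-y}) via Proposition~\ref{corol:full} with the appropriate $\t = q/p$ rational in $(0,1)$, which is the cleanest route and confirms sharpness.

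Finally, for the even case $q = 2s$, the extra factor of $2$ enters through divisibility: since $\gcd(p,q)=1$ and $q$ is even, $p$ is odd, so $D = N^2p^2 - (2N-1)q^2 \equiv N^2 p^2 \equiv N^2 \pmod 8$ must be a square, constraining $N$ modulo $4$; more importantly, in the factorization $(Np-d)(Np+d) = (2N-1)q^2 = 4s^2(2N-1)$ the two factors $Np\pm d$ are both even (as $d \equiv Np \pmod 2$), so $4 \mid (Np-d)(Np+d)$ automatically, but then $s^2(2N-1)/\big((Np-d)/2\big)\big((Np+d)/2\big)$-type counting gains a factor $2$, effectively replacing $q^2$ by $q^2/2 = 2s^2$ in the bound $q^2 - p \ge \cdots$. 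Carrying this through gives $p \le (q^2-2)/2 = 2s^2 - 1$ with equality forcing, by the same rationalization as in Proposition~\ref{corol:full}, $p = 2s^2-1$, $q = 2s$, $N = (q^2-4)/4 = s^2 - 1$. I would close by checking this triple against (\ref{equ:x-y}): with $x = p/d$, $y = q/d$ one needs $N^2 x^2 - (2N-1)y^2 = 1$ to have the stated solution, which a direct substitution of $N = s^2-1$, $p = 2s^2-1$, $q = 2s$ confirms, establishing both the bound and its sharpness.
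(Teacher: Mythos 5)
The gap is exactly where you flag it, and it is fatal: the ``parity/size contradiction'' you hope for when $p=q^2-1$ does not exist. Your own derivation is correct up to the point where you conclude that $2N=q^2-1$ (so $q$ odd) and $d=Np-1$, i.e.\ $D=(Np-1)^2$. But then you write ``one must also check $D$ is a square'' --- there is nothing left to check: $(Np-1)^2$ is automatically a perfect square, so condition (iii) of Theorem~\ref{theo:Q-P} holds and $p=q^2-1$ is \emph{not} excluded. Concretely, take $q=3$, $p=q^2-1=8$, $N=(q^2-1)/2=4$: then $D=16\cdot 64-9\cdot 7=961=31^2$, the larger root of (\ref{k-l}) is $k=22/7\in\Q{}$, and by Lemma~\ref{lem:algebra} we get $4\in\mathcal{N}(8/3)$, hence $8/3\in\mathcal{A}$ with $p=8>q^2-2=7$. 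The same pattern ($q$ odd, $p=q^2-1$, $N=(q^2-1)/2$, $d=Np-1$) works for every odd $q\geq 3$, so this is not an isolated accident.

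You should also be aware that the paper's own argument at this step has the identical hole: it asserts that (\ref{equ:New}) and (\ref{equ:finite}) ``together yield $q^2\leq 2N$,'' but with $p=q^2-1$ equation (\ref{equ:New}) gives only $2N\geq q^2-1$, and when $q$ is odd the borderline $2N=q^2-1$ is perfectly consistent with (\ref{equ:finite}) (the required inequality $(q^2-1)^3<q^2(q^4-3q^2+3)$ reduces to $-1<0$). So your instinct that this is ``the main obstacle'' was sound --- it is an obstacle that neither your proposal nor the printed proof overcomes, and the bound (\ref{equ:max1}) as stated appears to fail. As for the remaining steps, your route via the factorization $(Np-d)(Np+d)=(2N-1)q^2$ and divisibility is genuinely different from the paper's, which instead pins $d$ with the two-sided bound $Np-2<d\leq Np-1$ (and, in the even-$q$ case, a parity argument giving $Np-3<d\leq Np-2$); your sketch of those steps is plausible but, like the paper's, it all rests on the first exclusion being established, which it is not.
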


\begin{proof}
First prove (\ref{equ:max1}). In view of (\ref{e-ii}), it suffices
to exclude the case $q=p^2-1$. Assuming the contrary, and let the
last equality hold. Then (\ref{equ:New}) and (\ref{equ:finite})
together  yield
$$
q^2\leq 2N<q^2\frac{(p^2+2-q^2)}{p^2},
$$
which easily implies $q^2\leq 2$ that contradicts to the lower
bound (\ref{e-i}).

In order to analyze the equality case, assume that $p=q^2-2$. It
follows then from (\ref{equ:New}) that $4N\geq q^2-1=p+1$ and
(\ref{equ:discr}) can be rewritten as follows
\begin{equation*}\label{equ:d1}
d^2=N^2p^2-(2N-1)(p+2)=(Np-1)^2-(4N-1-p)\leq (Np-1)^2
\end{equation*}
where $d>0$ is an integer. Hence, $d\leq (Np-1)$.

On the other hand,
$$
d^2=(Np-2)^2+(p-2)(2N-1)> (Np-2)^2
$$
which together with the preceding inequality implies $d=Np-1$, and
consequently
$$
p=q^2-2=4N-1.
$$
In particular, $q$ must be an odd number. One can readily find
that in this case  $N$ is an integer. Thus, the first case of the
corollary is proved.

To prove the second statement we suppose $q=2s$. We have from
(\ref{e-i}) $s\geq 2$ and due to irreducibility of $p/q$ we notice
that $p$ should be an odd number.

By Theorem~\ref{theo:Q-P} we have for the discriminant
$$
D\equiv d^2=N^2p^2-4s^2(2N-1).
$$
The last identity shows that $d$ has the same parity as $Np$ does.
Thus, $ d\leq pN-2.
$
On the other hand,
$$
(Np-2)^2\geq d^2=(Np-2)^2+4(Np-1-s^2(2N-1))
$$
and therefore,
\begin{equation*}\label{equ:y}
p-2s^2\leq -\frac{s^2-1}{N}.
\end{equation*}
The last inequality and $s\geq 2$ yield
$$
p<2s^2=\frac{q^2}{2}.
$$
Now, the inequality (\ref{equ:max2}) follows from the evenness of
$q$.

To analyze the equality case in (\ref{equ:max2}) we observe that
in our notation $d>Np-3$, otherwise we would have
$$
(Np-3)^2\leq d^2=N^2p^2-2(2N-1)(p+1),
$$
which implies $ 2N(p-2)\leq 7-2p$, and contradiction follows.
Thus, $d=Np-2$ and a straightforward  computation shows
$$
N=\frac{p-1}{2}=s^2-1
$$
which completes the proof.   \qed \end{proof}

\begin{prop}[Minimal series]\label{corol:p=q+2}
For $p/q\in\mathcal{A}$  the following lower bound holds
\begin{equation}\label{equ:min}
p\geq q+2
\end{equation}
with equality iff  $q=2s+1$ is an odd number, $p=q+2$ and
$N=(q-1)/2$.
\end{prop}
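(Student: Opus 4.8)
The plan is to separate the inequality $p\ge q+2$ from its equality case, using the machinery of Theorem~\ref{theo:Q-P}. For the inequality itself, we already know from \eqref{e-ii} that $p\ge q+1$, so the only thing to rule out is $p=q+1$. Assuming $p=q+1$, I would substitute into the discriminant criterion (iii), i.e.\ into \eqref{equ:discr}, and show that $D=N^2p^2-q^2(2N-1)$ cannot be a perfect square for any $N\ge2$. Concretely, with $p=q+1$ one has $q=p-1$, so $D=N^2p^2-(p-1)^2(2N-1)$; I would sandwich this between two consecutive squares near $Np$, exactly as in the proofs of Propositions~\ref{cor:p^2<q} and~\ref{theo:Q-P}. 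The parity of $p$ and $q$ (they are coprime and differ by $1$, so one is even and one is odd) lets me control which residue class $d$ must lie in, forcing $d\le Np-1$ or $d\le Np-2$; comparing with the lower bound $d>Np-c$ obtained from \eqref{equ:finite1}-type expansion should pin $d$ down to a single value and then produce a contradiction with $q\ge3$ (inequality \eqref{e-i}). This is the same ``squeeze between nearby squares'' argument used repeatedly above, so it should go through routinely.

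For the equality case $p=q+2$, I would again plug into \eqref{equ:discr}: $D=N^2(q+2)^2-q^2(2N-1)$. Expanding, $D=N^2q^2+4N^2q+4N^2-2Nq^2+q^2 = (Nq+2N)^2-2Nq^2+q^2$, i.e.\ $D=(Np)^2-q^2(2N-1)$, and I want this to be a square. Writing $D=(Np-a)^2$ for the appropriate small positive integer $a$ dictated by parity (since $p=q+2$, $p$ and $q$ have the same parity; but they are coprime, so both are odd, hence $q=2s+1$), one gets $2a\,Np - a^2 = q^2(2N-1) - $ (lower-order), and solving the resulting linear relation in $N$ should give exactly $N=(q-1)/2=s$. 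I would verify conversely that for $q=2s+1$, $p=q+2=2s+3$, $N=s$ the discriminant is a genuine square — in fact this is precisely the minimal series \eqref{q-q-q}: with $N=s$ we have $\gamma=p/q=\frac{2N+3}{2N+1}$, matching \eqref{equ:adia}, and one checks $D=N^2p^2-q^2(2N-1)$ evaluates to a perfect square by direct computation (it should factor as a square of a linear expression in $s$). Uniqueness of $N$ follows because the squeeze forces $d=Np-a$ for a unique $a$, and then $N$ is determined.

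The main obstacle I anticipate is getting the parity bookkeeping exactly right: deciding, in the case $p=q+1$, which of $Np-1$ or $Np-2$ is the correct upper bound for $d$, and making sure the corresponding lower bound from the Bernoulli/expansion estimate genuinely contradicts it rather than merely restricting $N$. A secondary subtlety is confirming that the candidate $N=(q-1)/2$ is an \emph{integer} (which needs $q$ odd — guaranteed here because $\gcd(p,q)=1$ and $p-q=2$), and that no \emph{other} value of $N$ also yields a square discriminant when $p=q+2$; the latter requires using finiteness of $\mathcal N(\gamma)$ from \eqref{equ:finite} together with the squeeze to see $\mathcal N(\gamma)=\{(q-1)/2\}$ exactly. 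Everything else is routine algebra of the same flavour as the preceding proofs in this section.
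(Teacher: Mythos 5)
Your overall plan matches the paper's: invoke Theorem~\ref{theo:Q-P}(iii), set $p=q+1$ and then $p=q+2$, and squeeze the integer $d$ with $d^2=D$ between nearby integers, helped by parity. But your calibration of the squeeze is off in a way that would block the argument as stated. For $p=q+1$ the value of $d$ that actually makes $D$ a square is $q(N-1)+N+1=Np-q+1$, i.e.\ at distance roughly $q$ from $Np$ --- not within $1$ or $2$ of it, as your phrase ``forcing $d\le Np-1$ or $d\le Np-2$'' suggests. Pairing the weak upper bound $d\le Np-1$ with the lower bound $d>Np-q^2/p$ from \eqref{equ:finite1} leaves a window of about $q-1$ candidate values for $d$, and parity cannot narrow that to one. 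The paper sidesteps this by completing the square in $q$ rather than in $N$: with $p=q+1$,
$$
D=q^2(N-1)^2+2qN^2+N^2=\Big(q(N-1)+\frac{N^2}{N-1}\Big)^2-\frac{N^2(2N-1)}{(N-1)^2},
$$
so $d$ lies strictly between $q(N-1)+N$ and $q(N-1)+N+2$, forcing $d=q(N-1)+N+1$; squaring then gives $2q=2N+1$, an immediate parity contradiction. Your route could in principle be repaired by using the sharp Bernoulli-type bound \eqref{equ:upper} as the upper estimate, but then you must verify that its correction term stays below $1$ for \emph{every} $N\geq 2$, which is more delicate than the paper's one-line completion of the square.

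The same miscalibration recurs in the equality case: there the winning $d$ is $(N-1)q+2N+2=Np-q+2$, so your shift $a$ equals $q-2$ and is not ``small'', and parity alone (both $p,q$ odd, so $a$ odd) does not determine it. The paper again completes the square in $q$, obtains $(N-1)q+2N<d\le(N-1)q+2N+2$, rules out the middle value by parity, and then $d=(N-1)q+2N+2$ forces $q=2N+1$, i.e.\ $N=(q-1)/2$. On the positive side, your reading of the parity constraint (that $p-q=2$ and $\gcd(p,q)=1$ make both $p,q$ odd, hence $q=2s+1$), your identification with the series \eqref{q-q-q}, and your insistence on the converse check --- that for $q=2s+1$, $p=2s+3$, $N=s$ the discriminant really is a perfect square, indeed $D=(2s^2+s+1)^2$ --- are all correct, and the last point is something the paper leaves implicit.
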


\begin{proof}
Form (\ref{e-ii}) we have $p\geq q+1$. Again, we argue by
contradictory and assume that $p=q+1$. Then we have the following
relation for the discriminant
\begin{equation}\label{equ:d^2}
D=N^2(q+1)^2-q^2(2N-1)=d^2,
\end{equation}
and $d$ is a positive integer. Hence
$$
d^2=q^2(N-1)^2+2qN^2+N^2=\left(q(N-1)+\frac{N^2}{N-1}\right)^2-\frac{(2N-1)N^2}{(N-1)^2}.
$$
In particular,
$$
d<q(N-1)+\frac{N^2}{N-1}=q(N-1)+N+1+\frac{1}{N-1}
$$
which implies
$$
d\leq q(N-1)+N+1\equiv d_1+1.
$$

On the other hand
$$
d_1^2= (q(N-1)+N)^2=q^2(N-1)^2+2qN(N-1)+N^2<d^2,
$$
and it follows that $ d=d_1+1. $ Then by virtue of (\ref{equ:d^2})
we arrive at $ 2q=2N-1. $ Thus, the contradictions shows that
$p\geq q+2$ and (\ref{equ:min}) is proved.

Now assume that the equality $p=q+2$ holds. Arguing as above we
obtain the following inequality
\begin{equation}\label{strong}
d<(N-1)q+2N+2+\frac{2}{N-1}\leq (N-1)q+2N+3.
\end{equation}
On the other hand,
\begin{equation}\label{equ:even}
d^2\equiv q^2(N-1)^2+4qN^2+4N^2> ((N-1)q+2N)^2,
\end{equation}
which implies by strong inequality in (\ref{strong})
$$
(N-1)q+2N+1\leq d\leq(N-1)q+2N+2.
$$
But it follows from (\ref{equ:even}) that $d$ and $(N-1)q$ have
the same parity. This gives exactly one choice in the last
inequality
$$
d=(N-1)q+2N+2
$$
which after comparison with the  definition of $d$ in
(\ref{equ:d^2}) implies $ q=2N+1$, and the required relation
follows. \qed \end{proof}

The `algebraic' constants $\gamma\in\mathcal{A}$ for small
denominators $q\leq 30$ are displayed in Figure~\ref{fig:p-q}.

\medskip

\begin{figure}[h]
  \begin{center}
        \includegraphics[width=0.75\textwidth]{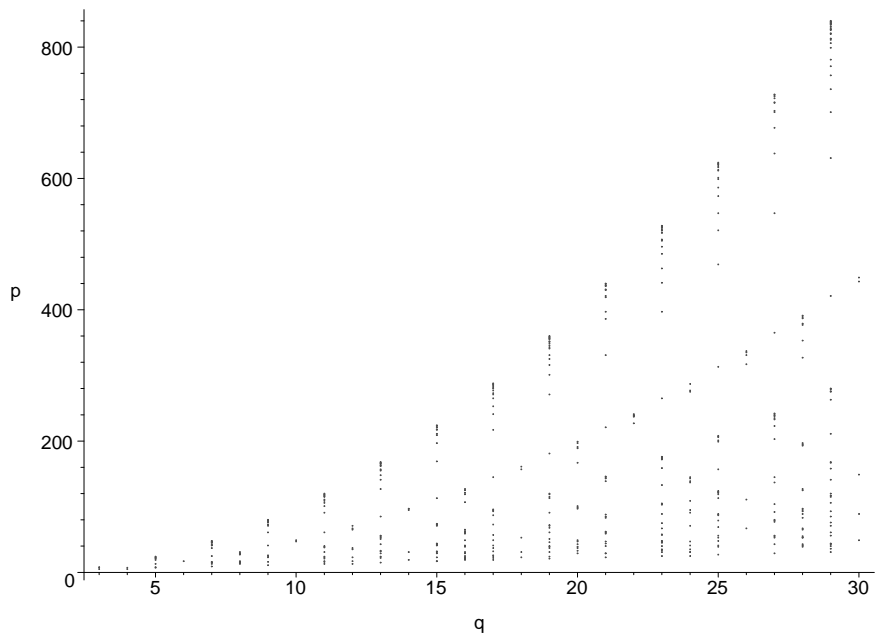}
        \caption{$q$-$p$ Diagram}
    \label{fig:p-q}
  \end{center}
\end{figure}

\section{Concluding remarks}

We wish point out that our formula (\ref{equ:poly}) can also be
deduced by using the general representation of $p$-harmonic
functions in the plane near their singular points. The mentioned
representation is established in \cite{IM} and \cite{M} by using the
hodograph method. Our approach, nevertheless, is more direct and use
no the quasiregularity of the complex gradient of the $N$-solutions.

The author wish to thank  the anonymous reviewer for many
constructive advices and bringing our attention to the papers
\cite{IM} and \cite{M}.



\begin{thebibliography}{99}


\bibitem{Ar2}
Aronsson G., \textit{On the partial differential equation $\uxsq\uxx
+ 2 \ux \uy \uxy + \uysq \uyy = 0$}, Ark. Mat. \textbf{69}(1968),
395--425.

\bibitem{Ar3}
Aronsson G., \textit{On certain singular solutions of the partial
differential equation $\aeq = 0$}, Manuscripta  Math.,
\textbf{47}(1984), 133--151.

\bibitem{Ar86}
Aronsson G., \textit{Construction of singular solutions to the
$p$-harmonic equation and its limit equation for $p=\infty$},
{Manuscripta Math.}, \textbf{{56}}(1986),  135--158.

\bibitem{Ar89}
Aronsson G., \textit{Representation of a $p$-harmonic function near
a critical point in the plane}, {Manuscripta Math.}
{\textbf{66}}(1989), 73--95.

\bibitem{Ar91}
Aronsson G., \textit{Aspects of $p$-harmonic functions in the
plane.} Preprint. Link\"oping University. LiTH-MAT-R-91-38. ISSN:
0348-2960. P.~29.


\bibitem{ArLin}
Aronsson G., Lindqvist P., \textit{On $p$-harmonic functions in the
plane and their stream functions}, {J.~Diff. Eq.},
{\textbf{74}}(1988), 157--178.

\bibitem{Bers}
Bers L., \textit{Mathematical aspects of subsonic and transonic gas
dynamics}. New-York. Jhon Wiley \& Sons. 1958.


\bibitem{Ev82}
Evans L.C., \textit{A new proof of local $C^{1,\alpha}$-regularity
for solutions of certain degenerate elliptic P.D.E.}, {J.~Diff. Eq.}
{\textbf{45}}(1982), 356--373.


\bibitem{IM}
Iwaniec T., Manfredi J. J., \textit{Regularity of $p$-harmonic
functions on the plane}, Rev. Mat. Iberoamericana \textbf{5}(1989),
no. 1-2, 1--19

\bibitem{JLM}
Juutinen P., Lindqvist P., Manfredi J.J., \textit{The Infinity
Laplacian: Examples and Observations}, Preprint. Institut
Mittag-Leffler, N~\textbf{26}, 2000.

\bibitem{Lew83}
Lewis J.L., \textit{Regularity of the derivatives of solutions to
certain degenerate elliptic equations}, {Indiana Univ. Math. J.},
{\textbf{32}}(1983), 849--858.

\bibitem{M}
Manfredi J. J., \textit{Isolated singularities of $p$-harmonic
functions in the plane}, SIAM J. Math. Anal. \textbf{22}(1991), no.
2, 424--439.


\bibitem{MPS}
Manfredi J.J., Petrosyan A., Shahgholian H., \textit{A free boundary
problem for $\infty$-laplace equation}, {Calc. Var. Partial
Differential Equations}, {\textbf{14}}(2002), 359--384.

\bibitem{Tom}
Milne-Thomson L.M., \textit{Theoretical aerodinamics.} Fourth Ed.
Dover Pub., New York. 1973.

\bibitem{Persen89}
Persson L., Quasiradial solutions of the $p$-harmonic equation in
the plane and their stream functions. Licentiate Thesis. Lule\aa,
1989.

\bibitem{Zor}
Tkachev, V. G., Zorina, I. A., \textit{On entire solutions to
quasilinear equations with quadratic nonlinearities}. (\textit{to
appear})


\bibitem{Ur}
Uraltceva, N. N. \textit{Degenerate quasilinear elliptic systems},
Zap. Nau\v{c}n. Sem. Leningrad. Otdel. Mat. Inst. Steklov. (LOMI),
{\textbf{7}}(1968), 184--222. (in Russian)



\end{thebibliography}
\end{document}